\colorlet{dblue}{blue!40!black}
\theoremstyle{plain}
\newtheorem{theorem}{Theorem}
\newtheorem{corollary}[theorem]{Corollary}
\newtheorem{lemma}[theorem]{Lemma}
\newtheorem{proposition}[theorem]{Proposition}
\newtheorem*{maintheorem*}{Main Theorem}
\theoremstyle{definition}
\newtheorem{definition}[theorem]{Definition}
\newtheorem{remark}[theorem]{Remark}
\definecolor{mediumblue}{RGB}{0,0,205}
\renewcommand{\labelenumi}{\theenumi}
\newcommand{\tup}[1]{\langle#1\rangle}
\newcommand{\pair}[2]{\tup{#1,#2}}
\newcommand{\nb}{\nobreakdash}
\newcommand{\msf}{\mathsf}
\newcommand{\mbb}{\mathbb}
\newcommand{\mcl}{\mathcal}
\newcommand{\mrm}{\mathrm}
\newcommand{\nat}{\mbb{N}}
\newcommand{\myall}[2]{\forall #1.\,#2}
\renewcommand{\emptyset}{\varnothing}
\newcommand{\emptyword}{\varepsilon}
\newcommand{\length}[1]{|#1|}
\newcommand{\card}[1]{|#1|}
\newcommand{\fap}[2]{#1(#2)}
\newcommand{\powerset}{\fap{\wp}}
\newcommand{\funin}{\mathrel{:}}
\newcommand{\where}{\mid}
\tikzset{math/.style = {execute at begin node=$, execute at end node=$}}
\tikzset{label/.style = {execute at begin node=\scriptstyle, execute at end node=}}
\tikzset{node/.style={rectangle,rounded corners=1mm,draw=black,inner sep=1mm,outer sep=1mm}}
\newcommand{\tvs}{\_}
\newcommand{\computablewrt}[1]{${#1}$\nobreakdash-com\-pu\-table}
\newcommand{\rcomputable}{\computablewrt{r}}
\newcommand{\computabilitywrt}[1]{${#1}$\nobreakdash-com\-pu\-ta\-bi\-lity}
\newcommand{\rcomputability}{\computabilitywrt{r}}
\newcommand{\recognizablewrt}[1]{${#1}$\nobreakdash-rec\-og\-niz\-able}
\newcommand{\rrecognizable}{\recognizablewrt{r}}
\newcommand{\lang}[1]{\mrm{Lang}(#1)}
\newcommand{\str}[1]{#1^{\nat}}
\newcommand{\descsetexpmid}{\mathrel{\vert}}
\newcommand{\descsetexpBigmid}{\mathrel{\Big\vert}}
\newcommand{\descsetexp}[2]{\left\{{#1}\descsetexpmid{#2}\right\}}
\newcommand{\descsetexpBig}[2]{\Bigl\{{#1}\descsetexpBigmid{#2}\Bigr\}}
\newcommand{\setexp}[1]{\left\{{#1}\right\}}
\newcommand{\mocDFAsover}{\fap{\msf{DFA}}}
\newcommand{\mocDFAswithidover}{\fap{\msf{DFA}_{\mrm{id}}}}
\newcommand{\mocFSTsover}{\fap{\msf{FST}}}
\newcommand{\moctwowayFSTsover}{\fap{\text{2-$\msf{FST}$}}}
\newcommand{\mocTMsover}{\fap{\msf{TM}}}
\newcommand{\mocTMDsover}{\fap{\msf{TMD}}}
\begin{document}

\title{Regularity Preserving but not Reflecting Encodings}

\author{%
  J\"{o}rg Endrullis \quad Clemens Grabmayer \quad Dimitri Hendriks \\[2ex]
  \small{Department of Computer Science} \\
  \small{VU University Amsterdam}
}

\date{}

\maketitle

\thispagestyle{plain}
\pagestyle{plain}

\begin{abstract}
  Encodings, that is, injective functions from words to words, 
  have been studied extensively in several settings.
  In computability theory 
  the notion of encoding is crucial
  for defining computability on arbitrary domains,
  as well as for comparing the power of models of computation.
  In language theory much attention has been devoted 
  to regularity preserving functions.
  
  A natural question arising in these contexts is:
  Is there a bijective encoding such that its image function preserves regularity of languages,
  but its preimage function does not?
  Our main result answers this question in the affirmative:
  For every countable class~$\mcl{L}$ of languages
  there exists a bijective encoding~$f$ such that 
  for every language $L \in \mcl{L}$ 
  its image~$f[L]$ is regular.
  
  Our construction of such encodings
  has several noteworthy consequences.
  Firstly, anomalies arise when models of computation
  are compared with respect to a known concept of implementation 
  that is based on encodings which are not required to be computable:
  Every countable decision model 
  can be implemented, in this sense, by finite-state automata, even via bijective encodings.
  Hence deterministic finite-state automata 
  would be equally powerful as Turing-machine deciders. 
  
  A second consequence concerns the recognizability of sets of natural numbers
  via number representations and finite automata.
  A set of numbers is said to be recognizable with respect to a representation 
  if an automaton accepts the language of representations.
  Our result entails 
  that there exists a number representation 
  with respect to which every recursive set is recognizable.
\end{abstract}

\section{Introduction}\label{sec:intro}
In order to define computability of number-theoretic functions
through computational models that operate on strings of symbols from an alphabet $\Sigma$
(rather than defining computability directly via recursion schemes)
one usually employs \emph{(number) representations}, that is, 
injective functions $r : \nat \to \Sigma^*$.
A function $f : \nat \to \nat$ is called \emph{\rcomputable} 
  (\emph{computable by a Turing machine using representation~$r$})
if there exists a Turing-computable function $\varphi : \Sigma^* \to \Sigma^*$
such that $\varphi \circ r = r \circ f$.
For representations~$r$ that are informally computable 
(i.e.,\ there is a machine-implementable algorithm that always terminates, and computes~$r$),
it can be argued on the basis of Church's thesis
(similar as e.g.\ in~\cite[p.~28]{roge:1967})
that \rcomputability\ does not depend on the specific choice of $r$,
and coincides with partial recursiveness.

Shapiro~\cite{shap:1982} studied the influence that (unrestricted) bijective representations~$r$ 
have on the notion of \rcomputability.
He found that the only functions that are \rcomputable\ with respect to all bijective representations~$r$
are the almost constant and almost identity functions;
and that there are functions that are not \rcomputable\ for any representation~$r$.
Furthermore, he defines `acceptable' number representations: 
a bijective representation~$r$ is called `acceptable' 
if the successor function lifted to the $r$-coded natural numbers is Turing computable. 
He goes on to show that, a representation~$r$ is acceptable,
if and only if \rcomputability\ coincides with partial recursiveness. 

In this paper we focus on the notion of computability by \emph{finite automata} of \emph{sets} of natural numbers. 
In particular, we investigate how number representations determine the sets of natural numbers 
that are computable by finite-state automata. 
Such sets are called `recognizable':
a set $S \subseteq \nat$ 
is called \emph{\rrecognizable} (\emph{recognizable with respect to representation~$r$}), 
if there is a finite automaton that for all $n \in \nat$ decides
membership of $n$ in $S$ when $r(n)$ is given to it as input.

We are interested in comparing representations~$r$ 
with respect to their computational power as embodied by the \rrecognizable\ sets.
This idea gives rise to a hierarchy via a subsumption preorder between representations:
$r_1 : \nat \to A^*$
subsumes $r_2 : \nat \to B^*$ 
if all \recognizablewrt{r_2} sets are also \recognizablewrt{r_1}. 
There are several natural questions concerning this preorder;
to name a few:
\begin{enumerate}
  \item When does a number representation subsume another?
  \item Is the hierarchy proper: do there exist representations $r_1$ and $r_2$
    such that $r_1$ subsumes $r_2$, but not vice versa?  
  \item Is there a representation that subsumes all others?
  \item Is every (injective) number representation subsumed by a bijective number representation? 
  \item What classes $\mathcal{C} \subseteq \powerset{\nat}$ of sets of natural numbers
     are recognizable with respect to a number representation?
\end{enumerate}

As our computational devices are finite automata,
all of these questions boil down to problems in language theory.
In particular the comparison of number representations is
intimately connected with \emph{encodings}, injective mappings from words to words,
that have the property that their image function preserves regularity of languages.
For bijective number representations $f : \nat \to A^*$ and $g : \nat \to B^*$,
we have that $f$ subsumes $g$ if and only if the set function
\begin{align*}
  (f \circ g^{-1})[\_]
\end{align*}
preserves regularity of languages;
here we use the notation $h[\_]$ to denote the image function of a function $h$. 
Regularity preserving functions 
play an important role in different areas of computer science,
and have been studied extensively.
An important result in this area is the work~\cite{pin:silv:2005,pin:2009} of Pin and Silva, 
providing a characterization of regularity preservation of preimage functions
in terms of uniformly continuous maps on the \emph{profinite topology}.

A natural question that presents itself then is the following:%
\smallskip
\begin{quote}
  \textit{Are there bijective functions $f : \Sigma^* \to \Sigma^*$ whose
  image function $f[\_]$ preserves regularity of languages, 
  but whose preimage function $f^{-1}[\_]$ does not?}
\end{quote}
\smallskip
For \emph{bijective} word functions we experienced this to be a very challenging question,
which to the best of our knowledge, has remained unanswered in the literature.
Using the results of~\cite{pin:silv:2005,pin:2009}, it can equivalently be formulated as follows:
\smallskip
\begin{quote}
  \textit{Are there bijective functions $f : \Sigma^* \to \Sigma^*$
  such that $f$ is uniformly continuous,
  but $f^{-1}$ is not uniformly continuous in the profinite topology?}
\end{quote}
\smallskip
Concerning recognizable sets and the hierarchy of number representations,
the question translates to:
\smallskip
\begin{quote}
  \textit{Are there bijective number representations $f$ and $g$ such that $f$ strictly subsumes $g$?}
\end{quote}
\smallskip
If this were not the case, subsumption would imply equivalence for bijective number representations,
and the hierarchy would collapse.

Our main result (Theorem~\ref{thm:main}), which allows us to answer all of the above questions, is the following:
\begin{maintheorem*}
  For every countable class~$\mcl{L} \subseteq \powerset{\Sigma^*}$ of languages 
  over a finite alphabet $\Sigma$, and for every alphabet $\Gamma$ with $\card{\Gamma} \ge 2$,
  there exists a bijective encoding~$f : \Sigma^* \to \Gamma^*$ such that 
  for every language $L \in \mcl{L}$ 
  its image~$f[L]$ is regular.
\end{maintheorem*}

With respect to computability theory and recognizable sets of natural numbers, 
this result can be restated as follows:
\smallskip
\begin{quote}
  \textit{%
    For every countable decision model $\mcl{M} \subseteq \powerset{\nat}$,
    there exists a bijective representation $f : \nat \to \Sigma^*$ such that 
    every set $M \in \mcl{M}$ is $f$\nb-recognizable.} 
\end{quote}
\smallskip
As a direct consequence, when allowing for arbitrary bijective number representations,
we find the unsought: 
\begin{align}
  \begin{gathered}
    \text{\textit{Finite automata are as strong as Turing-machine deciders.}}
  \end{gathered}
  \tag{\Lightning}
  \label{anomaly}
\end{align}
That is, there is a bijective representation such that 
finite automata can recognize any computable set of natural numbers.

Our result also has consequences in the context of the work 
by Boker and Dershowitz on comparing the power of computational models, as described below.
Models over different domains are typically compared
with the help of encodings that translate between different number representations.
In order to prevent encodings from changing the nature of the problem,
they are usually required to be `informally algorithmic', `informally computable', or `effective'
(see e.g.\ \cite[p.\hspace*{1pt}27]{roge:1967}).
However, the latter concepts are rather vague, and in any case non-mathematical. 
Therefore they are unsatisfactory from the viewpoint of a rigorous conceptual analysis. 

In the formal approach for comparing models of computation 
proposed by Boker and Dershowitz in~\cite{boke:2004,boke:ders:2006,boke:2008},
encodings are merely required to be injective.
On the basis of this stipulation,
a computational model $\mcl{M}_2$ is defined to be `at least as powerful as' $\mcl{M}_1$, 
denoted by $\mcl{M}_1 \;\lesssim\; \mcl{M}_2$,
if there exists an encoding $\rho : \Sigma_1^* \to \Sigma_2^*$ 
with the property that
for every function $f$ computed by $\mcl{M}_1$ 
there is a function $g$ computed by $\mcl{M}_2$
such that the following diagram commutes:
\begin{align*}
  \begin{aligned}
    \begin{tikzpicture}[baseline=2ex,node distance=25mm]
      \node (l) {$\Sigma_1^*$};
      \node (r) [right of=l] {$\Sigma_2^*$};
      \node (l') [below of=l,node distance=15mm] {$\Sigma_1^*$};
      \node (r') [right of=l'] {$\Sigma_2^*$};
      \draw[->] (l) -- node [above] {$\rho$} (r);
      \draw[->] (l') -- node [below] {$\rho$} (r');
      \draw[->] (l) -- node [left] {$f\in\mcl{M}_1$} (l');
      \draw[->] (r) -- node [right] {$g\in\mcl{M}_2$} (r');
    \end{tikzpicture}
  \end{aligned}
  && \hspace*{2ex}
  \begin{aligned}
    \mcl{M}_1 \;\lesssim_{\rho}\; \mcl{M}_2
  \end{aligned}
\end{align*}
(In order to highlight the encoding used, \mbox{$\mcl{M}_1 \;\lesssim_{\rho}\; \mcl{M}_2$} is written.)
Although encodings are not required to be (informally) computable, this approach
works quite well in practice.  
                                          
However, in analogy to what we found for recognizability, 
one runs into the anomaly~\eqref{anomaly} again,
namely when comparing the power of decision models with the preorder $\lesssim$.  
Our main result implies
$\mcl{C} \;\lesssim\; \msf{DFA}$
for every countable class of decision problems~$\mcl{C}$,
where $\msf{DFA}$ denotes the class of deterministic finite-state automata.
Even stronger, it follows that  there is a bijective encoding $\rho$ such that 
$\mcl{C} \;\lesssim_{\rho}\; \msf{DFA}$.
As a consequence we obtain that 
$\msf{TMD} \lesssim_{\rho} \msf{DFA}$ holds
for the class $\msf{TMD}$ of Turing-machine deciders,
and a bijective encoding~$\rho$.

\subsection*{Further Related Work}

For a general introduction to automata and regular languages we refer to~\cite{saka:2009,hopc:motw:ullm:2000}.
We briefly mention related work with respect to regularity preserving functions
apart from work~\cite{pin:silv:2005,pin:2009} of Pin and Silva that we have already discussed above.
The works \cite{stea:hart:1963,kosa:1970,mato,seif:1974,kosa:1974,seif:mcna:1976,kvet:koub:1999}
investigate regularity preserving relations;
in particular,
\cite{seif:mcna:1976} provides a characterization of prefix-removals that preserve regularity.
In \cite{pin:saka:1982}, Pin and Sakarovitch study operations and transductions that preserve regularity.
In \cite{koze:1996}, Kozen gives a characterization of word functions over a one-letter alphabet 
whose preimage function preserves regularity of languages.
The paper~\cite{bers:boas:cart:peta:pin:2006} by Berstel, Boasson, Carton, Petazzoni and Pin
characterizes language preserving `filters'; a filter is a set $F \subseteq \nat$
used to delete letters from words of the language as indexed by elements of the filter.

\section{Preliminaries}\label{sec:prelims}
We use standard terminology and notation, see, e.g., \cite{allo:shal:2003}.
Let $\Sigma$ be an alphabet, 
i.e., a finite non-empty set of symbols.
We denote by $\Sigma^*$ the set of all finite words over~$\Sigma$,
and by $\emptyword$ the empty word.
%
The set of infinite sequences over~$\Sigma$ is 
$\str{\Sigma} = \{ \sigma \where \sigma : \nat \to \Sigma \}$
with $\nat = \{0,1,2,\ldots\}$, the set of natural numbers. 

A deterministic finite-state automaton (DFA)
is a tuple $A = \tup{Q,\Sigma,\delta,q_0,F}$
consisting of 
a finite set of states $Q$,
an input alphabet $\Sigma$, 
a transition function~$\delta : Q \times \Sigma \to Q$,
an initial state $q_0 \in Q$, and
a set $F \subseteq Q$ of accepting states.
The transition function~$\delta$ is extended to $\delta^* : Q \times \Sigma^* \to Q$ by 
\begin{align*}
  \delta^*(q,\emptyword) = q && 
  \delta^*(q,aw) = \delta^*(\delta(q,a),w) \,,
\end{align*}
for all states $q \in Q$, letters $a \in \Sigma$ and words $w \in \Sigma^*$.
We will write just $\delta$ for $\delta^*$.
A word $w \in \Sigma^*$ is accepted by $A$ if $\delta(q_0,w)$, 
the state reached after reading $w$, is an accepting state.
We write $\lang{A}$ for the language accepted by the automaton~$A$,
i.e., $\lang{A} = \{ w \in \Sigma^* \where \delta(q_0,w) \in F \}$.

A DFA with output (DFAO) is a tuple $\tup{Q,\Sigma,\delta,q_0,\Delta,\lambda}$,
with the first four components as in the definition of a DFA, 
but with, instead of a set of accepting states, an output alphabet~$\Delta$, 
and an output function $\lambda : Q \to \Delta$.
A DFAO $A = \tup{Q,\Sigma,\delta,q_0,\Delta,\lambda}$ realizes a function mapping words over $\Sigma$ 
to letters in $\Delta$;
we denote this function also by $A$, that is, we define $A : \Sigma^* \to \Delta$ by
\begin{align*}
  A(w) = \lambda(\delta(q_0,w)) && \text{for all $w \in \Sigma^*$.}
\end{align*}

A DFA $\tup{Q,\Sigma,\delta,q_0,F}$
can thus be viewed as a DFAO $\tup{Q,\Sigma,\delta,q_0,\{0,1\},\chi_F}$
where $\chi_F$ is the characteristic function of $F$;
instead of a state being accepting or not,
it has output $1$ or $0$ respectively.

Two automata $A$ and $A'$ over the same input alphabet $\Sigma$ 
are equivalent on a set $X \subseteq \Sigma^*$ 
if $A(x) = A'(x)$ for all \mbox{$x \in X$}.
We use this functional notation also for a DFA $A$,
stipulating: $A(x) = 1$ ($A(x) = 0$) iff $A$ accepts $x$ ($A$ does not accept $x$).  

A partition $P$ of a set $U$ is a family of sets
$P \subseteq \powerset{U}$ 
such that $\emptyset \not \in P$,
$\bigcup_{A \in P} A = U$,
and for all $A,B \in P$ with $A \neq B$,
$A \cap B = \emptyset$.

The pigeon hole principle (PHP) states that
if $n$ pigeons are put into $m$ pigeonholes 
with $n > m$, then at least one pigeonhole contains more than one pigeon.
PHP for infinite sets is that if infinitely many pigeons 
are put into finitely many holes, then one hole must contain infinitely many pigeons.

Let $A$, $B$, $X$, and $Y$ be sets, with $X \subseteq A$ and $Y \subseteq B$.
For a function $A \to B$ we write $f[X]$ for the image of $X$ under $f$,
that is, $f[X] = \{ f(x) \where x \in X\}$.
Likewise, we write $f^{-1}[Y]$ for the preimage of $Y$ under $f$, 
i.e., $f^{-1}[Y] = \{ x \where f(x) \in Y \}$.

A function $F : \powerset{\Sigma^*} \to \powerset{\Gamma^*}$ \emph{preserves regularity}
if $F(L)$ is regular whenever $L$ is a regular language.

\section{Main Results}\label{sec:main}

In this section, we prove our main results.
We first work towards Theorem~\ref{thm:main} stating that 
for every countable class~$\mcl{L}$ of languages 
there exists a bijective encoding~$f : \Sigma^* \to \Gamma^*$ 
such that $f[L]$ is regular for every language $L \in \mcl{L}$.
The proof proceeds in two stages:
we first prove the existence of injective 
encodings (Lemma~\ref{lem:inj:reg:crea}),
and then strengthen this result to bijective functions (Lemma~\ref{lem:inj2bij}).
From Theorem~\ref{thm:main} we then obtain the existence of bijective functions that are regularity preserving  
but not regularity reflecting, Corollary~\ref{cor:puzzle}.

For injective encodings~$f$ we cannot require the images $f[L]$ to be regular;
instead we require $f[L]$ to be recognizable among $f[\Sigma^*]$.
This leads to the notion of `relatively~regular~in'.

\begin{definition}\label{def:rel:reg}
  Let $L,M \subseteq \Sigma^*$ be formal languages over the alphabet~$\Sigma$, with $L \subseteq M$.
  Then \emph{$L$ is relatively regular in $M$}
  if there exists a regular language $R$ such that $L = R \cap M$.
\end{definition}
So a regular language $S \subseteq \Sigma^*$ 
is relatively regular in $\Sigma^*$.

\begin{lemma}\label{lem:inj:reg:crea}
  Let $\Sigma$ and $\Gamma$ be finite alphabets, with $\card{\Gamma} \ge 2$.
  Let $\mcl{L} \subseteq \powerset{\Sigma^*}$ be a countable set 
  of formal languages. 
  There exists an injective function $f : \Sigma^* \to \Gamma^*$ 
  such that for every $L \in \mcl{L}$,
  $f[L]$ is relatively regular in 
  $f[\Sigma^*]$.
\end{lemma}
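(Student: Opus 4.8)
The plan is to construct $f$ so that, for each language in $\mcl{L}$, the membership information about a word $w$ is stored at a \emph{fixed, finite position} inside the codeword $f(w)$, where a single finite automaton can read it off uniformly over all but finitely many words. Enumerate the countable family as $\mcl{L} = \{L_1, L_2, L_3, \ldots\}$ and enumerate $\Sigma^* = \{w_0, w_1, w_2, \ldots\}$ in length\nb-lexicographic order. For each $n$, let $m_n \in \{0,1\}^n$ be the \emph{membership block} whose $i$\nb-th bit is $1$ iff $w_n \in L_i$, for $1 \le i \le n$. I would first build an auxiliary injective map $g : \Sigma^* \to \{0,1,\#\}^*$ by setting $g(w_n) = m_n \,\#\, \mrm{bin}(n)$, placing the membership block at the front, a separator $\#$ in the middle, and a binary tag $\mrm{bin}(n)$ at the end. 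The tag guarantees injectivity of $g$ (the symbol $\#$ occurs exactly once and marks the block boundary, after which $n$ is recorded), while the front placement guarantees that whenever $n \ge i$ the $i$\nb-th symbol of $g(w_n)$ is exactly the bit $\chi_{L_i}(w_n)$. To land inside $\Gamma$, I would fix two distinct letters of $\Gamma$, rename them $0,1$, and compose $g$ with the fixed prefix (block) code $c$ sending $0 \mapsto 00$, $1 \mapsto 11$, $\# \mapsto 01$, obtaining $f = \hat{c} \circ g : \Sigma^* \to \Gamma^*$; as $\hat{c}$ is an injective length\nb-doubling homomorphism, $f$ is injective and a fixed position in $g(w)$ still corresponds to a fixed position in $f(w)$.

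Next, for each index $i$ I would exhibit a regular language $R_i$ witnessing that $f[L_i]$ is relatively regular in $M := f[\Sigma^*]$, i.e.\ that $f[L_i] = R_i \cap M$. Let $P_i \subseteq \Gamma^*$ be the (plainly regular) language of words whose symbol at the fixed position encoding the $i$\nb-th membership bit equals $1$. For every $w_n$ with $n \ge i$ the block $m_n$ contains position $i$, so $f(w_n) \in P_i$ iff $w_n \in L_i$. The only words not covered by this uniform test are the finitely many $w_0, \ldots, w_{i-1}$, whose codewords form a finite set; I would correct for them by hand, setting
\[
  R_i := \big(P_i \setminus \{f(w_0), \ldots, f(w_{i-1})\}\big) \cup \{\,f(w_n) \mid n < i,\ w_n \in L_i\,\} \,.
\]
Since a finite modification of a regular language is regular, $R_i$ is regular.

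It then remains to verify $R_i \cap M = f[L_i]$ by splitting on the index $n$ of a codeword $f(w_n) \in M$: for $n \ge i$ the finite corrections do not apply (injectivity of $f$ keeps the corrected codewords distinct from $f(w_n)$), so $f(w_n) \in R_i$ iff $f(w_n) \in P_i$ iff $w_n \in L_i$; for $n < i$ membership is decided directly by the explicit finite correction, again giving $f(w_n) \in R_i$ iff $w_n \in L_i$. Hence $R_i \cap M = f[L_i]$, as required by Definition~\ref{def:rel:reg}. The main obstacle is conceptual rather than computational: a single finite automaton must decide membership in $L_i$ for \emph{all} of the infinitely many words, even though it can inspect only boundedly much of each codeword. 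The fixed\nb-position encoding resolves exactly this, reducing the infinite obligation to a constant\nb-position symbol test plus a harmless finite patch for the words too short to carry the $i$\nb-th bit.
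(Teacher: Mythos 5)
Your proposal is correct and takes essentially the same approach as the paper: you store the membership bits $\chi_{L_1}(w_n)\cdots\chi_{L_n}(w_n)$ at fixed positions inside $f(w_n)$ so that a finite automaton can read off the $i$\nb-th bit uniformly, and you handle the finitely many words $w_0,\ldots,w_{i-1}$ too short to carry that bit by a finite correction. The paper's construction is a slight simplification of yours: it takes $f(v_n)$ to be the bare membership block itself, obtaining injectivity for free from $\length{f(v_n)} = n$ instead of appending your tag $\#\,\mathrm{bin}(n)$ with a block code, and it absorbs the finite correction into the states of a DFAO (hardcoding $\chi_i(v_n)$ for the $i$ shortest codewords) rather than patching the regular language $R_i$ by hand.
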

\begin{proof}
  Let $L_1, L_2, L_3, \ldots$ be an enumeration of $\mcl{L}$, 
  and let $v_0, v_1, v_2, \ldots$ be an enumeration of $\Sigma^*$.
  For $i \ge 1$, we write $\chi_i : \Sigma^* \to \{0,1\}$ 
  for the characteristic function of $L_i$, that is,
  \begin{align*}
    \chi_i(v) = 
    \begin{cases}
      1 & \text{if $v \in L_i$} \\
      0 & \text{otherwise}
    \end{cases}
    && \text{for all $v \in \Sigma^*$.}
  \end{align*}
  Without loss of generality we assume
  $\Gamma = \{0,1,\ldots,k-1\}$ for some $k \ge 2$.
  Define the function $f : \Sigma^* \to \Gamma^*$ by
  \begin{align*}
    f(v_n) = \chi_1(v_n) \, \chi_2(v_n) \, \cdots \, \chi_n(v_n) \,, && \text{for all $n \in \nat$.}
  \end{align*}
  For every $i = 1,2,\ldots$, we construct a DFAO~$A_i$
  and show that $f[L_i] = \lang{A_i} \cap f[\Sigma^*]$ 
  witnessing that $f[L_i]$ is relatively regular in $f[\Sigma^*]$, 
  as required.  

  Fix an arbitrary integer $i \ge 1$.
  Define
  $A_i = \tup{Q,\Gamma,\delta,0,\lambda}$ where 
  $Q = \{0,1,\ldots,i-1\} \cup \{i_0,i_1\}$,
  $\Gamma = \{0,1,\ldots,k-1\}$,
  $\delta : Q \times \Gamma \to Q$ is defined by
  \begin{align*}
    \delta(q,a) & = q+1 && \text{for all $q \in \{0,1,\ldots,i-2\}$ and $a \in \Gamma$,} \\
    \delta(i-1,0) & = i_0 \\
    \delta(i-1,a) & = i_1 && \text{for all $a \in \Gamma \setminus \{0\}$,} \\
    \delta(i_j,a) & = i_j && \text{for all $j \in \{0,1\}$,}
  \end{align*}
  and 
  $\lambda : Q \to \Gamma$ is defined, for all $q \in \{0,1,\ldots,i-1\}$, by
  \begin{align*}
    \lambda(q) = \chi_i(v_q) &&
    \lambda(i_0) = 0  &&
    \lambda(i_1) = 1 \,.
  \end{align*}
  The automaton~$A_i$ is depicted in Figure~\ref{fig:auto}.
    \begin{figure*}[tbh]
  \begin{center}
    \begin{tikzpicture}[math,node distance=25mm]
      \node [node] (0) {0 / \chi_i(v_0)};
      \node [node,right of=0] (1) {1 / \chi_i(v_1)};
      \node [node,right of=1] (2) {2 / \chi_i(v_2)};
      \node [right of=2,node distance=20mm] (3) {\cdots};
      \node [node,right of=3] (i-1) {i-1 / \chi_i(v_{i-1})};
      \node [node,below left of=i-1] (i0) {i_0 / 0};
      \node [node,below right of=i-1] (i1) {i_1 / 1};
      \draw [->] (0) -- node [label,above] {\Gamma} (1);
      \draw [->] (1) -- node [label,above] {\Gamma} (2);
      \draw [->] (2) -- node [label,above] {\Gamma} (3);
      \draw [->] (3) -- node [label,above] {\Gamma} (i-1);
      \draw [->] (i-1) -- node [label,left,inner sep=2mm] {0} (i0);
      \draw [->] (i-1) -- node [label,right,inner sep=2mm] {1,2,\ldots,k-1} (i1);
      \draw [->] (i0) to[out=-60,in=-120,looseness=4] node [label,below] {\Gamma} (i0);
      \draw [->] (i1) to[out=-60,in=-120,looseness=4] node [label,below] {\Gamma} (i1);
    \end{tikzpicture}
  \end{center}
  \caption{The DFAO $A_i$ used in Lemma~\ref{lem:inj:reg:crea}. A transition labelled~$\Gamma$ stands for transitions labelled~$a$ for every $a \in \Gamma$.}
  \label{fig:auto}
  \end{figure*}
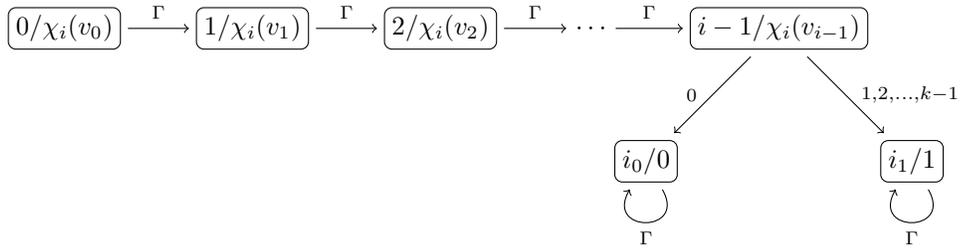

  We show that 
  $f[L_i] = \lang{A_i} \cap f[\Sigma^*]$;
  equivalently,
  for all $v \in \Sigma^*$,
  \begin{align*}
    v \in L_i \iff f(v) \in \lang{A_i} \,.
  \end{align*}
  Let $v \in \Sigma^*$, so $v = v_n$ for some $n \in \nat$.
  Note that $\length{f(v_n)} = n$.
  Hence, if $n \le i$,
  the automaton $A_i$ is in state $n$ after having read the word $f(v_n)$ 
  and outputs $\chi_i(v_n)$.
  If $n \ge i$, then after having read $f(v_n)$, 
  the automaton is in state $i_j$, 
  where $j$ is the $i$-th letter of $f(v_n)$, that is, $j = \chi_i(v_n)$.
  In both cases we get $v_n \in L_i$ if and only if $f(v_n) \in \lang{A_i}$.
\end{proof}

For lifting the result of Lemma~\ref{lem:inj:reg:crea} from injective to bijective encodings,
we need some preliminary notions and results.

\begin{definition}\label{def:attracted}
  Let $U$ be a set and $I,C \subseteq U$.
  The set
  \emph{$C$ is attracted to $I$}
  if $C \subseteq I$
  whenever $C \cap I$ is finite.
  For a partition~$E$ of $U$, we say that \emph{$E$ is attracted to $I$},
  when, for every $C \in E$, $C$ is attracted to $I$.
\end{definition}

Equivalently, $E$ is attracted to $I$ if
for every $C  \in E$, 
if $C \setminus I \ne \emptyset$ then $C \cap I$ is infinite.

\begin{lemma}\label{lem:modify}
  Let $E = \{ C_0, C_1 \ldots, C_{n-1} \}$ be a finite partition of $\Sigma^*$ 
  with $C_i \subseteq \Sigma^*$ a regular set for every $i \in \{0,1,\ldots,n-1\}$. 
  Let $I \subseteq \Sigma^*$ and assume that $E$ is attracted to $I$.
  For every DFA~$A$ 
  there exists a DFA~$A'$ 
  such that $A'$ is equivalent to~$A$~on~$I$
  and the refined partition 
  \begin{align*}
    E' = \{ C'_{i,j} \where i \in \{0,1,\ldots,n-1\},\, j \in \{0,1\} \}
  \end{align*}
  is attracted to~$I$, 
  where,
  for $i \in \{0,1,\ldots,n-1\}$, $j \in \{0,1\}$\,,
  \begin{align*}
    C'_{i,j} = C_i \cap \{ u \in \Sigma^* \where A'(u) = j \} \,.
  \end{align*}
\end{lemma}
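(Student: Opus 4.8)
The plan is to start from the obvious candidate $A' = A$ and to repair it, block by block, so that the refined partition becomes attracted to $I$ while the outputs on $I$ are left untouched. The only way the naive choice $A' = A$ can fail is that for some block $C_i$ the two refined pieces $C_i \cap \{u \mid A(u) = 0\}$ and $C_i \cap \{u \mid A(u) = 1\}$ are not both attracted to $I$. So I first analyse each block separately. If $C_i \cap I$ is finite, then attraction of $E$ gives $C_i \subseteq I$, so every refined piece of $C_i$ is contained in $I$ and is vacuously attracted; here no repair is needed. If $C_i \cap I$ is infinite, then at least one output value $j \in \{0,1\}$ is \emph{frequent}, meaning $C_i \cap I \cap \{u \mid A(u) = j\}$ is infinite. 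When both values are frequent, both refined pieces already meet $I$ infinitely often, so again no repair is needed. The only problematic case is when exactly one value $j_i$ is frequent; then the set $F_i = C_i \cap I \cap \{u \mid A(u) = 1 - j_i\}$ is finite.

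In that problematic case I redefine the output on the whole block $C_i$: keep the output $1 - j_i$ only on the finite set $F_i$, and output $j_i$ everywhere else in $C_i$. The point of this choice is twofold. First, it does not disturb agreement with $A$ on $I$: for $u \in C_i \cap I$ either $A(u) = j_i$, which is preserved, or $A(u) = 1 - j_i$, in which case $u \in F_i$ and the output $1 - j_i$ is retained. Second, after the repair the piece of $C_i$ receiving output $1 - j_i$ is exactly $F_i$, which is finite and contained in $I$, hence vacuously attracted; and the piece receiving $j_i$ contains the infinite set $(C_i \cap I) \setminus F_i$, so it meets $I$ infinitely often and is attracted as well.

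It remains to check that the repaired map is realized by a DFA. Since the $C_i$ are regular and form a partition of $\Sigma^*$, a product of DFAs recognizing the $C_i$ determines, from its state after reading $u$, the unique index $i$ with $u \in C_i$. The finitely many exceptional words $F = \bigcup_i F_i$ form a finite, hence regular, set, which a further component recognizes; and a copy of $A$ supplies the value $A(u)$. Taking the product of these three automata and defining the output on each product state by the block-wise rule above (output $A(u)$ on the untouched blocks and on $F$, and the constant $j_i$ on the remaining words of each problematic block $C_i$) yields the desired DFA $A'$.

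The main obstacle, which this construction is designed to overcome, is that $A'$ cannot detect membership in $I$: its output is a function of the input word alone, whereas the requirement of being equivalent to $A$ on $I$ singles out the set $I$, which need not be regular. The attraction hypothesis is exactly what resolves this tension: it forces the words on which we are compelled to keep the infrequent output (in order to stay faithful to $A$ on $I$) to be finite in each block, so overriding the output to the frequent value everywhere else is both harmless on $I$ and enough to restore attraction. The crux of the argument is to verify that these finitely many exceptions really are the only obstruction, and that correcting them blockwise does not reintroduce a violation in some other piece.
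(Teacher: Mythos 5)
Your proposal is correct and is essentially the paper's own argument: both repairs flip the automaton's output to the frequent value exactly on the non-$I$ part of a bad refined piece (in your notation, on $C_i \setminus (F_i \cup C'_{i,j_i})$, which equals the paper's $C'_{i,0}\setminus I$), keeping the finite exceptional set $F_i \subseteq I$, so that agreement on $I$ is preserved while the bad piece shrinks to a finite subset of $I$ and the good piece keeps its infinite intersection with $I$ (pigeonhole). The only difference is presentational: you classify and fix all blocks in one pass via an explicit product-DFA construction, whereas the paper performs the same single-block repair iteratively, justifying realizability by noting that $C'_{i,0}\setminus I$ is regular since $C'_{i,0}$ is regular and $C'_{i,0}\cap I$ is finite.
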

\begin{proof}
  We start with $A' = A$,
  and repeatedly adapt $A'$ (and therewith $E'$) until $E'$ is attracted to~$I$,
  in such a way that equivalence with $A$
  is upheld. 
  
  Assume that $E'$ is not attracted to $I$. 
  Then there exist $i \in \{0,1,\ldots,n-1\}$ and $j \in \{0,1\}$ such that 
  $C'_{i,j} \setminus I \ne \emptyset$ but $C'_{i,j} \cap I$ is finite.
  Without loss of generality, assume that $j = 0$.
  Since $C_i = C'_{i,j} \cup C'_{i,1-j}$ 
  it follows that $C_{i} \setminus I \ne \emptyset$ and hence $C_{i} \cap I$ is infinite by assumption. 
  By the pigeonhole principle (for infinite sets) it follows that $C'_{i,1}\cap I$ is infinite.
  Since $C_i$ is a regular set and $A'$ is a finite automaton,
  it follows that $C'_{i,0}$ is regular as it is the intersection of two regular sets.
  As $C'_{i,0} \cap I$ is finite, also $C'_{i,0} \setminus I$ is regular.
  As a consequence we can change the finite automaton~$A'$ to accept the terms in $C'_{i,0} \setminus I$
  (and otherwise to behave as before). 
  This adaptation preserves equivalence,
  and we now have 
  that $C'_{i,j}$ is attracted to~$I$ for $j = 0,1$,
  since, after the adaptation, 
  $C'_{i,0} \cap I = C'_{i,0}$ and $C'_{i,1}\cap I$ is infinite.
  We repeat the procedure until $C'_{i,j}$ is attracted to $I$ 
  for every $i \in \{0,1,\ldots,n-1\}$ and $j \in \{0,1\}$\,.
\end{proof}

%
%

The following lemma, Lemma~\ref{lem:inj2bij}, is a key contribution of our paper.
It states that every injection $f : A \to \Gamma^*$ (with $A$ some countably infinite set)
can be transformed into a bijection $g : A \to \Gamma^*$ 
such that, for all $L \subseteq A$, $g[L]$ is a regular language
whenever $f[L]$ is relatively regular in the image~$f[A]$.
Before proving the lemma, we sketch the construction.
We construct $g$ as the limit of a sequence of adaptations of $f$.
Roughly speaking, we make $f$ `more bijective' in every step.
We let 
\begin{itemize}
  \item $v_0,v_1,v_2,\ldots$ be an enumeration of $A$, and
  \item $w_0,w_1,w_2,\ldots$ be an enumeration of $\Gamma^*$.
\end{itemize}
Figure~\ref{fig:f0} sketches an injective function $f$;
an arrow from $v_i$ to $w_j$ indicates that $f(v_i) = w_j$.

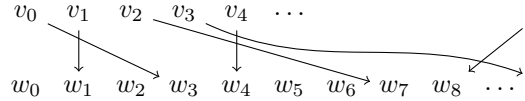
\begin{figure}[h]
  \centering
  \begin{tikzpicture}
    \foreach \i in {0,1,2,3,4} {
      \node (v\i) at (\i*.7cm,0) {$v_\i$};
    }
    \node at (5*.7cm,0) {$\ldots$};
    \foreach \i in {0,1,2,3,4,5,6,7,8} {
      \node (w\i) at (\i*.7cm,-1cm) {$w_\i$};
    }
    \node at (9*.7cm,-1cm) {$\ldots$};
    \begin{scope}[->]
    \draw (v0) -- (w3);
    \draw (v1) -- (w1);
    \draw (v2) -- (w7);
    \draw (v3) to[out=-25,in=160] ($(w8)+(1cm,.2cm)$);
    \draw (v4) -- (w4);
    \draw ($(w8)+(1cm,.8cm)$) -- (w8);
    \end{scope}
  \end{tikzpicture}
  \caption{Starting situation: the function $f$.}
  \label{fig:f0}
\end{figure}

\noindent
The idea is that we change the target of arrows such that all words $w_i$ for $i \in\nat$
become part of the image.
For every natural number $n = 0,1,2,\ldots$ we will pick (while avoiding repetitions) 
a word $v_{k_n}$ (for $k_n \in\nat$) from the input domain
and then adapt $f$ by stipulating $v_{k_n} \mapsto w_n$.
Then, in the limit, every word $w \in \Gamma^*$ will be in the image.
(In order for the limit of the adaptation to be a function, 
we also need to guarantee that every $v_{k_n}$ will be picked precisely once.)

The crucial point of the construction is the following: when changing arrows, we need to ensure that
\begin{enumerate}
  \item [($\star$)] the limit of the process preserves relative regularity.
\end{enumerate}
Note that for bijective functions $g : A \to \Gamma^*$ 
we have that $g[L]$ is relatively regular in the image $g[A] = \Gamma^*$
if and only if $g[L]$ is regular.
Thus if we can ensure ($\star$), 
the resulting bijective function $g$ will have the desired property.

How to pick the $v_{k_n}$ for ensuring ($\star$)?
Let $A_0,A_1,A_2,\ldots$ be an enumeration of all finite automata over the alphabet~$\Gamma$.
We write $u \sim_n v$ if for every $i < n$, the automaton $A_i$
accepts the word $u$ if and only if it accepts the word $v$.
We then pick for every natural number $n\in\nat$, 
a word $v_{k_n}$ such that $f(v_{k_n}) \sim_n w_n$ (and the word $v_{k_n}$ has not been picked before).
In other words, we pick $v_{k_n}$
such that the first $n$ automata $A_0,A_1,\ldots,A_{n-1}$
cannot distinguish $f(v_{k_n})$ from the image $w_n$ after the adaptation $v_{k_n} \mapsto w_n$.
This choice guarantees that every automaton $A_i$ ($i \in \nat$) is only 
affected by a finite number of adaptations, namely the first~$i$ transformation steps.
For every further adaptation ($j > i$), the behavior of the automaton $A_i$ is taken into 
account for the choice of $v_{k_j}$, and as a consequence the modification $v_{k_j} \mapsto w_j$ 
preserves the acceptance behavior of $A_i$.
Then for the limit $g$ of the adaptation process 
we have for almost all $n\in\nat$ that $A_i$ accepts $f(v_n)$ if and only if $A_i$ accepts $g(v_n)$.
In order to guarantee that every $v_i$ will be picked eventually and that the adaptation preserves injectivity, 
we pick among the suitable candidates for $v_{k_n}$ the one which
appears first in the enumeration $w_0,w_1,\ldots$.

\begin{remark}
  There is a caveat here that we will ignore in this sketch of the construction. 
  We actually need to make sure that a word $v_{k_n}$ with these properties exist.
  To ensure this, the equivalence classes with respect to $\sim_n$ must be attracted to the image of $f$.
  This is in general not the case, but we can
  employ Lemma~\ref{lem:modify} to adapt the automata outside of the image of $f$.
  We refer to the proof of Lemma~\ref{lem:inj2bij} for the details.
\end{remark}

We explain this process at the example of the function $f$ given in Figure~\ref{fig:f0}.
For the first step $n=0$, we want to adapt $f$ such that $w_0$ becomes part of the image of $f$.
Note that the relation $\sim_0$ relates all words of $\Gamma^*$. 
As a consequence we can pick any word $v_{k_0}$.
We choose $v_{k_0} = v_1$ since the image $f(v_1)$ appears first in the sequence $w_0,w_1,\ldots$,
and we adapt the function $f$ by $v_1 \mapsto w_0$.

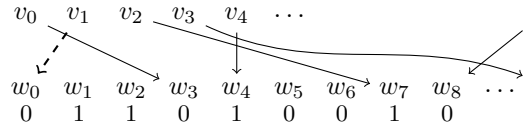
\begin{figure}[h]
  \centering
  \begin{tikzpicture}
    \foreach \i in {0,1,2,3,4} {
      \node (v\i) at (\i*.7cm,0) {$v_\i$};
    }
    \node at (5*.7cm,0) {$\ldots$};
    \foreach \i in {0,1,2,3,4,5,6,7,8} {
      \node (w\i) at (\i*.7cm,-1cm) {$w_\i$};
    }
    \node at (9*.7cm,-1cm) {$\ldots$};
    \foreach \i/\x in {0/0,1/1,2/1,3/0,4/1,5/0,6/0,7/1,8/0} {
      \node at (\i*.7cm,-1.3cm) {{\small $\x$}};
    }
    \begin{scope}[->]
    \draw (v0) -- (w3);
    \draw [thick,dashed] (v1) -- (w0);
    \draw (v2) -- (w7);
    \draw (v3) to[out=-25,in=160] ($(w8)+(1cm,.2cm)$);
    \draw (v4) -- (w4);
    \draw ($(w8)+(1cm,.8cm)$) -- (w8);
    \end{scope}
  \end{tikzpicture}
  \caption{Result of the first adaptation of the function $f$.}
  \label{fig:f1}
\end{figure}

The result of this first adaptation is shown in Figure~\ref{fig:f1}.
For the second step ($n=1$) we want that $w_1$ 
becomes part of the image.
Now $\sim_1$ relates words that have equal behavior with respect to acceptance by the automaton $A_0$.
The numbers $0$ and~$1$ below the words $w_i$ in Figure~\ref{fig:f1}
indicate whether $A_0$ accepts the word $w_i$ (1) or not (0).
The word $w_1$ is accepted by $A_0$ and likewise are $f(v_2)$ and $f(v_4)$.
Among these candidates, we choose $v_{k_1} = v_4$ since $f(v_4)$ 
appears first in the sequence $w_1,w_2,\ldots$.
We modify the function $f$ by setting $v_4 \mapsto w_1$.

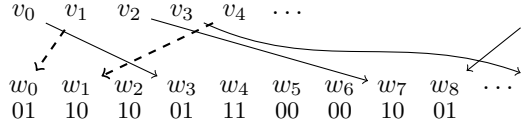
\begin{figure}[h]
  \centering
  \begin{tikzpicture}
    \foreach \i in {0,1,2,3,4} {
      \node (v\i) at (\i*.7cm,0) {$v_\i$};
    }
    \node at (5*.7cm,0) {$\ldots$};
    \foreach \i in {0,1,2,3,4,5,6,7,8} {
      \node (w\i) at (\i*.7cm,-1cm) {$w_\i$};
    }
    \node at (9*.7cm,-1cm) {$\ldots$};
    \foreach \i/\x in {0/01,1/10,2/10,3/01,4/11,5/00,6/00,7/10,8/01} {
      \node at (\i*.7cm,-1.3cm) {{\small $\x$}};
    }
    \begin{scope}[->]
    \draw (v0) -- (w3);
    \draw [thick,dashed] (v1) -- (w0);
    \draw (v2) -- (w7);
    \draw (v3) to[out=-25,in=160] ($(w8)+(1cm,.2cm)$);
    \draw [thick,dashed] (v4) -- (w1);
    \draw ($(w8)+(1cm,.8cm)$) -- (w8);
    \end{scope}
  \end{tikzpicture}
  \caption{Result of the second adaptation of the function $f$.}
  \label{fig:f2}
\end{figure}

The result of the second adaptation is shown in Figure~\ref{fig:f2}.
Now $n=2$ and $\sim_2$ relates words that have equal acceptance behavior
with respect to automata $A_0$ and $A_1$.
To this end, we now write $A_0(w_i)\,A_1(w_i)$ below each word $w_i$ in Figure~\ref{fig:f2}.
The word $w_2$ is accepted by $A_0$ but rejected by $A_1$.
The only (displayed) candidate for $v_{k_2}$ exhibiting the same behavior is $f(v_2)$.
The result of this third adaptation is shown in Figure~\ref{fig:f3}.

\begin{figure}[h]
  \centering
  \begin{tikzpicture}
    \foreach \i in {0,1,2,3,4} {
      \node (v\i) at (\i*.7cm,0) {$v_\i$};
    }
    \node at (5*.7cm,0) {$\ldots$};
    \foreach \i in {0,1,2,3,4,5,6,7,8} {
      \node (w\i) at (\i*.7cm,-1cm) {$w_\i$};
    }
    \node at (9*.7cm,-1cm) {$\ldots$};
    \begin{scope}[->]
    \draw (v0) to[bend left=10] (w3);
    \draw [thick,dashed] (v1) -- (w0);
    \draw [thick,dashed] (v2) -- (w2);
    \draw (v3) to[out=-25,in=160] ($(w8)+(1cm,.2cm)$);
    \draw [thick,dashed] (v4) to[bend right=10] (w1);
    \draw ($(w8)+(1cm,.8cm)$) -- (w8);
    \end{scope}
  \end{tikzpicture}
  \caption{Result of the third adaptation of the function $f$.}
  \label{fig:f3}
\end{figure}
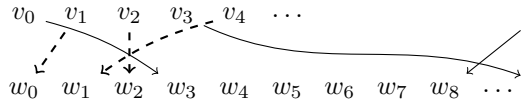

This process continues for every $n \in \nat$, and
the limit of this process is a bijective function $g$ with the desired properties.
The construction is made precise in the proof of Lemma~\ref{lem:inj2bij}.

\begin{lemma}\label{lem:inj2bij}
  Let $A$ be a countably infinite set.
  For every injection $f : A \to \Gamma^*$
  there exists a bijection $g : A \to \Gamma^*$
  such that for all $L \subseteq A$,
  if $f[L]$ is relatively regular in the image $f[A]$,
  then $g[L]$ is a regular language.
\end{lemma}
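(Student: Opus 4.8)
The plan is to realize $g$ as the pointwise limit of a sequence of single re-assignments of $f$, exactly as in the sketch, and to make the construction rigorous by first \emph{repairing} the reference automata so that the equivalence classes we rely on are attracted to the image $I \defd f[A]$. Fix enumerations $v_0, v_1, \dots$ of $A$, $w_0, w_1, \dots$ of $\Gamma^*$, and $A_0, A_1, \dots$ of all DFAs over $\Gamma$, each viewed as a map $A_i \funin \Gamma^* \to \{0,1\}$ via its characteristic output. For $n \in \nat$ write $u \sim_n v$ if $A_i(u) = A_i(v)$ for all $i \lt n$; each $\sim_n$ has finite index, its classes are regular, it refines as $n$ grows, and $\bigcap_n {\sim_n}$ is the identity since distinct words are separated by some DFA. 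Note also that $I$ is infinite, because $A$ is infinite and $f$ is injective.

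\textbf{Repairing the automata.} I would first replace the $A_i$ by DFAs $A_i'$ that behave identically on $I$ but whose induced equivalence classes are attracted to $I$. This is done by induction on $n$, producing $A_0', A_1', \dots$ so that, redefining $\sim_n$ through the $A_i'$ with $i \lt n$, the partition $E_n$ of $\Gamma^*$ into $\sim_n$-classes is attracted to $I$. For $n = 0$ the partition is $\{\Gamma^*\}$, which is attracted to $I$ because $I$ is infinite. For the step, I apply Lemma~\ref{lem:modify} to the finite regular partition $E_n$ (attracted to $I$ by hypothesis) together with the automaton $A_n$; it returns $A_n'$ equivalent to $A_n$ on $I$ such that the refinement of $E_n$ by the output of $A_n'$, which is precisely $E_{n+1}$, is again attracted to $I$. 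Since each $A_i'$ agrees with $A_i$ on $I$, this repair is invisible on the actual images $f(v)$.

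\textbf{Building the bijection.} I then run the greedy process: at stage $n$, with finitely many sources already used, I let $C$ be the $\sim_n$-class of $w_n$ and choose a \emph{fresh} source $v_{k_n}$ with $f(v_{k_n}) \sim_n w_n$, taking among all candidates the one whose image has least index in $w_0, w_1, \dots$, and set $g(v_{k_n}) \defd w_n$. Candidate existence is exactly where attraction is needed: since $w_n \in C$ and $E_n$ is attracted to $I$, either $C \subseteq I$ or $C \cap I$ is infinite, so $C$ meets $I$ in enough points to find a source avoiding the finitely many already consumed. The least-index tie-break is what forces $n \mapsto k_n$ to be a bijection of $\nat$, so that $g$ becomes a total bijection $A \to \Gamma^*$ in the limit. \textbf{I expect this to be the main obstacle:} one must maintain \emph{simultaneously} that a legal candidate always exists (secured by the attraction property, hence by the repair step) and that the limit consumes every source exactly once while hitting every target exactly once; reconciling these two invariants is the delicate combinatorial core of the argument.

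\textbf{The payoff.} Finally, suppose $f[L]$ is relatively regular in $I$, say $f[L] = R \cap I$ with $R$ regular, and pick $i$ with $\lang{A_i} = R$. Since $f(v) \in I$ for all $v$, injectivity of $f$ gives $v \in L \iff A_i(f(v)) = 1 \iff A_i'(f(v)) = 1$, the last equality because $A_i'$ agrees with $A_i$ on $I$. For every $n \gt i$ the relation $f(v_{k_n}) \sim_n w_n$ includes the constraint coming from $A_i'$, so $A_i'(f(v_{k_n})) = A_i'(w_n) = A_i'(g(v_{k_n}))$; hence $w_n \in g[L] \iff v_{k_n} \in L \iff A_i'(w_n) = 1 \iff w_n \in \lang{A_i'}$. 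Thus $g[L]$ and the regular language $\lang{A_i'}$ agree on the cofinite set $\Gamma^* \setminus \{w_0, \dots, w_i\}$, and since regularity is preserved under finite modification, $g[L]$ is regular, as required.
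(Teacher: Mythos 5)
There is a genuine gap, and it sits exactly where you predicted the ``delicate combinatorial core'' would be: your two-phase architecture repairs the automata against the \emph{fixed} attractor $I \defd f[A]$, but attraction to $f[A]$ is too weak to keep candidate existence alive once sources start being consumed. Attraction only forbids a class $C$ with $C \setminus I \neq \emptyset$ from meeting $I$ finitely; it happily permits a \emph{finite} class $C \subseteq f[A]$, and all of the finitely many $f$\nb-preimages of such a $C$ may already have been consumed at earlier stages, leaving stage $n$ with no fresh candidate at all. Concretely: let $f[A] = \{w_5\} \cup X$ with $X$ infinite and containing only words of index greater than $5$, and fix an enumeration of all DFAs in which $A_0$ accepts $\emptyset$, $A_1$ accepts exactly $\{w_5\}$, and $A_2,A_3,A_4$ accept $\Gamma^*$ (distinct automata with these languages exist). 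Since $\{w_5\} \subseteq f[A]$, the refinement class $\{w_5\}$ satisfies $\{w_5\} \setminus f[A] = \emptyset$, so it is already attracted to $f[A]$ and Lemma~\ref{lem:modify} with $I = f[A]$ leaves $A_1$ untouched; hence $[w_5]_{\sim_n} = \{w_5\}$ for all $n \ge 2$. But at stage $0$ your own least-index tie-break consumes $f^{-1}(w_5)$, because $w_5$ is the least-index element of $f[A]$, and assigns it to $w_0$; stages $1$--$4$ then succeed by drawing from $X$. At stage $5$ the class of $w_5$ is the singleton $\{w_5\}$, whose unique preimage is gone: no fresh $v$ with $f(v) \sim_5 w_5$ exists, and the construction is stuck.

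The paper's remedy cannot be reproduced in a repair-first design: the attractor must \emph{evolve} with the assignment. At stage $n$ the relevant set is $I_n = c[\overline{K}_n] \cup \{w_0,\ldots,w_{n-1}\}$ --- the images of still-fresh sources together with the already-assigned targets --- and Lemma~\ref{lem:modify} is invoked for $A_n$ with $I = I_{n+1}$, i.e.\ only \emph{after} $k_n$ has been chosen, so repair and greedy assignment are necessarily interleaved ($A'_n$ depends on $k_0,\ldots,k_n$, which depend on $A'_0,\ldots,A'_{n-1}$). This restores candidate existence in the finite case: if $[w_n]_{\sim_n} \subseteq I_n$, then $w_n \in I_n$, and since $w_n \notin \{w_0,\ldots,w_{n-1}\}$ the invariant $c[\overline{K}_n] \cap \{w_0,\ldots,w_{n-1}\} = \emptyset$ forces $w_n$ itself to be the image of a fresh source; the tie-break then yields $c(k_n) = w_n$, keeping $I_{n+1} = I_n$. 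In the scenario above, the abandoned word $w_5$ is ejected from $I_1$ the moment its preimage is consumed, so the repair of $A_1$ (performed against $I_2 \not\ni w_5$) flips its behaviour on $w_5$ and dissolves the fatal singleton class. Your final ``payoff'' paragraph is essentially the paper's and is sound --- but only once the construction it relies on actually survives every stage, which your fixed-attractor version does not.
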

\begin{proof}
  
  Let $f : A \to \Gamma^*$ be an injective function.
  Let
  \begin{itemize}
    \item $v_0,v_1,v_2,\ldots$ be an enumeration of the set $A$, and let
    \item $w_0,w_1,w_2,\ldots$ be an enumeration of the set $\Gamma^*$, and let
    \item $A_0,A_1,A_2,\ldots$ be an enumeration of all finite-state automata over 
      $\Gamma$.
  \end{itemize}
  We abstract away from the domain by defining $c : \nat \to \Gamma^*$~by
  \begin{align*}
    \text{$c(i) = f(v_i)$ \quad for all $i \in \nat$.}
  \end{align*}   
  In this proof, will construct a bijective function 
  $d : \nat \to \Gamma^*$
  such that 
  for all $S \subseteq \nat$,
  if $c[S]$ is relatively regular in the image $c[\nat]$,
  then $d[S]$ is a regular language.

  Then we can define the function $g : A \to \Gamma^*$ by
  \begin{align*}
    g(w_i) = d(i) && \text{for all $i \in \nat$.}
  \end{align*}
  The bijectivity of $g$ follows immediately from bijectivity of~$d$, 
  and for every $L \subseteq A$ with $f[L]$ is relatively regular in $f[A]$,
  we have:
  let $S \subseteq \nat$ such that $L = \{\, v_i \mid i \in S \,\}$,
  then $c[S] = f(L)$ is relatively regular in $c[\nat] = f[A]$,
  and hence $d[S] = g[L]$ is regular.
  As a consequence, it suffices to construct a function $d$ with the properties above.
  
  For every $n \in \nat$ we will define a finite-state automaton $A'_n$
  with properties described below.
  For each $n \in \nat$, we define the equivalence relation
  ${\sim_n} \subseteq \Gamma^* \times \Gamma^*$
  by
  \begin{align*}
    u \sim_n u' \iff \myall{i \in \{0,1,\ldots,n-1\}}{A'_i(u) = A'_i(u')} \,,
  \end{align*}
  for all $u,u'\in\Gamma^*$. 
  So $\sim_n$ relates words that are not distinguished by the automata $A'_0,A'_1,\ldots,A'_{n-1}$.
  The relation ${\sim_n}$ gives rise to a partition 
  \begin{align*}
    E_n = \{\, [u]_{\sim_n} \where u \in \Gamma^* \,\}
  \end{align*} 
  where $[u]_{\sim_n} = \{ u' \in \Gamma^* \where u \sim_n u' \}$
  is the equivalence class of $u$ with respect to ${\sim_n}$. 

  We are ready to carry out the central construction.
  We first describe what objects we will define and their main properties.
  Then we give specifications of these objects,
  and afterwards we show that the objects are well-defined and that the properties hold. 
  
  We are going to define, by induction on $n \in \nat$,
  \begin{enumerate}
    \item 
      a natural number $k_n \in \overline{K}_{\!n}$\,,
      where
      \begin{align*}
        \overline{K}_{\!n} & = \nat \setminus K_{n} &&& K_n &= \{k_0,k_1,\ldots,k_{n-1}\} \,,
      \end{align*}
      such that
      \begin{align}
        c(k_n) \sim_n w_n \,.
        \label{eq:ckn:sim:wn}  
      \end{align}

    \item a bijective map $d_n : \nat \to I_n$\,, 
      where
      \begin{align*}
        I_n & = c[\overline{K}_{\!n}] \cup W_{\!n} &&&  W_{\!n} & = \{w_0,w_1,\ldots,w_{n-1}\} \,,
      \end{align*}
    
    \item an automaton $A'_n$\,,
      such that 
      \begin{enumerate}
        \item $\myall{u\in I_{n}}{A_n(u) = A'_n(u)}$\,, and
        \item $E_{n+1}$ is attracted to $I_{n+1}$.
      \end{enumerate}

  \end{enumerate}
  \medskip

  We will now give specifications of (i)--(iii), and thereafter show that they are well-defined,
  and satisfy the above mentioned properties.
  \begin{enumerate}
    \item 
      We define $k_n \in \overline{K}_{\!n}$ as follows:
      \begin{align*}
        k_n \;=\; c^{-1}(\min([w_n]_{\sim_n} \cap c[\overline{K}_{\!n}])) \,,
      \end{align*}
      where $\min$ denotes the minimum with respect to the order
      ${<}$ on $\Gamma^*$
      given by the enumeration $w_0 < w_1 < w_2 < \cdots$\,.
    \item 
      The map $d_n : \nat \to I_n$ is defined for all $i \in \nat$ by
      \begin{align*}
        d_n(i) = 
        \begin{cases}
          w_j  & \text{if $i = k_j$ for some $j \in \{0,1,\ldots,n-1\}$\,,} \\
          c(i) & \text{if $i \in \overline{K}_{\!n}$.}
        \end{cases}
      \end{align*}
    \item 
      For $A'_n$ we choose the automaton $A'$ guaranteed to exist by Lemma~\ref{lem:modify}
      where the lemma is invoked with $A = A_n$, $I = I_{n+1}$, and $E = E_n$. 
      
  \end{enumerate}

  We note that ${\sim_0} = \Gamma^* \times \Gamma^*$,
  and, by definition of $\sim_n$, it follows that
  \begin{align}
    \text{$E_n$ is finite and every $C \in E_n$ is a regular language.}
    \label{eq:reg}
  \end{align}

  We prove that (i)--(iii) are well-defined, 
  and that the following properties hold:
  \begin{enumerate}[(a)]
    
    \item \label{it:intersect}
      $c[\overline{K}_{\!n}] \cap W_n = \emptyset$\,,
    \item \label{it:attract}
      $E_n$ is attracted to $I_n$,
    \item \label{it:dn}
      $d_n$ is a bijection, and
    \item \label{it:kn}
      $k_n$ is well-defined.
  \end{enumerate}

  We first show that, for every $n \in \nat$, 
  items~\eqref{it:dn} and~\eqref{it:kn} follow from \eqref{it:intersect} and~\eqref{it:attract}.
  \begin{enumerate}[(a)]
    \setcounter{enumi}{2}
    \item 
      We show that $d_{n} : \nat \to I_n$ is bijective.
      Surjectivity of $d_n$ is immediate by definition of $I_n$.
      To show injectivity, assume there exist $i_1,i_2\in\nat$
      such that $d_{n}(i_1) = d_{n}(i_2)$.
      From \eqref{it:intersect} and the definition of $d_n$, it follows that either 
      $i_1,i_2 \in K_{n}$
      or
      $i_1,i_2 \in \overline{K}_{\!n}$.
      If $i_1,i_2 \in K_{n}$, then there are $j_1,j_2 \in \{0,1,\ldots,n-1\}$
      such that $i_1 = k_{j_1}$ and $i_2 = k_{j_2}$.
      Then it follows that $w_{j_1} = d_{n}(i_1) = d_{n}(i_2) = w_{j_2}$, which can only be in case if $j_1 = j_2$.
      Hence $i_1 = k_{j_1} = k_{j_2} = i_2$.
      If $i_1,i_2 \in \overline{K}_{\!n}$, then $c(i_1) = d_n(i_1) = d_n(i_2) = c(i_2)$.
      By injectivity of $c$ we have $i_1 = i_2$.
      
    \item 
      To see that $k_n$ is well-defined it suffices to show that
      $[w_n]_{\sim_n} \cap c[\overline{K}_{\!n}]$ is non-empty.
      By \eqref{it:attract} we know that 
      either $|[w_n]_{\sim_n} \cap I_{n}| = \infty$, 
      or $[w_n]_{\sim_n} \subseteq I_{n}$:
      \begin{itemize}
        \item For $|[w_n]_{\sim_n} \cap I_{n}| = \infty$, it follows that $|[w_n]_{\sim_n} \cap c[\overline{K}_{\!n}]| = \infty$.
        \item 
          For $[w_n]_{\sim_n} \subseteq I_{n}$, we get $w_n \in ( [w_n]_{\sim_n} \cap I_{n} )$. 
          Since $I_n = c[\overline{K}_{\!n}] \cup W_{\!n}$, and $w_n\notin W_n$,
          \eqref{it:intersect} entails
          $w_n \in c[\overline{K}_{\!n}]$,
          and hence
          $[w_n]_{\sim_n} \cap c[\overline{K}_{\!n}] \ne \emptyset$.
      \end{itemize} 
      We note that, indeed, 
      \eqref{eq:ckn:sim:wn} follows by the choice of $k_n$.

  \end{enumerate}

  We now prove \eqref{it:intersect} and \eqref{it:attract} by induction on $n\in\nat$.
  For the base case we have:
  \begin{enumerate}[(a)]
    \item[\eqref{it:intersect}] 
       $c[\overline{K}_{\!0}] \cap W_0 = \emptyset$ since $W_0 = \emptyset$, and
     \item[\eqref{it:attract}] 
       $E_0 = \{\Gamma^*\}$ is attracted to $I_0 = c[\nat]$.
  \end{enumerate}

  For the induction step, let $n\in\nat$ be arbitrary.
  We assume 
  $c[\overline{K}_{\!n}] \cap W_n = \emptyset$
  and that $E_n$ is attracted to~$I_n$ (induction hypothesis).
  We first prove the implication
  \begin{align}
    w_n \in c[\overline{K}_{\!n}] \implies w_n = c(k_n) \,.
    \label{eq:wn:in:Jn}
  \end{align}
  If $w_n \in c[\overline{K}_{\!n}]$,
  also $w_n \in ([w_n]_{\sim_n} \cap c[\overline{K}_{\!n}])$.
  Since $c[\overline{K}_{\!n}] \cap W_n = \emptyset$ by induction hypothesis,
  $w_n$ is the smallest element in $[w_n]_{\sim_n} \cap c[\overline{K}_{\!n}]$.
  Hence
  $w_n = \min([w_n]_{\sim_n} \cap c[\overline{K}_{\!n}])$ 
  and $c(k_n) = w_n$.

  \begin{enumerate}[(a)]
    \item 
      By the induction hypothesis we have $c[\overline{K}_{\!n}] \cap W_n = \emptyset$, 
      and so we have $c[\overline{K}_{\!n+1}] \cap W_n = \emptyset$.
      To see that $c[\overline{K}_{\!n+1}] \cap W_{n+1} = \emptyset$,
      it suffices to show $w_n \not\in c[\overline{K}_{\!n+1}]$.
      Assume, to derive a contradiction, that 
      $w_n \in c[\overline{K}_{\!n+1}]$.
      Then also 
      $w_n \in c[\overline{K}_{\!n}]$.
      Hence, by~\eqref{eq:wn:in:Jn}, we have $w_n = c(k_n)$.
      This contradicts $w_n \in c[\overline{K}_{\!n+1}]$,
      since, by injectivity of~$c$, 
      we have $c[\overline{K}_{\!n+1}] = c[\overline{K}_{\!n}]\setminus\{c(k_n)\} = c[\overline{K}_{\!n}]\setminus\{w_n\}$.

    \item 
      We have to prove that $E_{n+1}$ is attracted to $I_{n+1}$.
      We first show that 
      \begin{align}
        \text{$E_n$ is attracted to $I_{n+1}$}.
        \label{eq:En:att:In+1}
      \end{align}
      By injectivity of $c$ 
      we have $I_{n+1} = c[\overline{K}_{\!n}]\setminus\{c(k_n)\} \cup W_n \cup \{w_n\}$.
      Moreover, 
      since by induction hypothesis $c[\overline{K}_{\!n}] \cap W_n = \emptyset$,
      we have
      $c[\overline{K}_{\!n}]\setminus\{c(k_n)\} \cup W_n \cup \{w_n\} = (c[\overline{K}_{\!n}] \cup W_n)\setminus\{c(k_n)\} \cup \{w_n\}$. 
      Hence $I_{n+1} = (I_{n} \setminus \{c(k_{n})\}) \cup \{w_n\}$.
      Let $C \in E_n$ be arbitrary.
      By induction hypothesis we know that $C$ is attracted to $I_n$; 
      we distinguish the following two cases:
      \begin{itemize}
        \item 
          If $|C \cap I_{n}| = \infty$, then $|C \cap I_{n+1}| = \infty$.
          Hence $C$ is attracted to $I_{n+1}$.
        \item 
          Assume $C \subseteq I_{n}$.
          If $c(k_n) \not\in C$, then clearly $C \subseteq I_{n+1}$.
          If $c(k_n) \in C$, then $C = [w_n]_{\sim_n}$ since $c(k_n) \sim_n w_n$ due to \eqref{eq:ckn:sim:wn}. 
          From $w_n \in C \subseteq I_n$, we obtain $w_n \in c[\overline{K}_{\!n}]$.
          Hence, by \eqref{eq:wn:in:Jn} we have $w_n = c(k_n)$ and so $I_n = I_{n+1}$.
          
      \end{itemize}
      This concludes the proof of \eqref{eq:En:att:In+1}.

      Recall that $A'_n$ is the automaton $A'$
      obtained by invoking Lemma~\ref{lem:modify} with $A = A_n$, $I = I_{n+1}$, and $E = E_n$.
      Both requirements of the lemma are established above, 
      in \eqref{eq:reg} and \eqref{eq:En:att:In+1},
      so the automaton~$A'_n$ is well-defined.
      Let, moreover, $E'$ be the resulting partition obtained in the lemma.
      Lemma~\ref{lem:modify} guarantees that $E'$ is attracted to $I_{n+1}$.
      Moreover, by definition of $E_{n+1}$, we have $E' = E_{n+1}$.
      Hence $E_{n+1}$ is attracted to $I_{n+1}$.
      
  \end{enumerate}

  We now establish that
  every natural number $k$ will be picked as a $k_n$ eventually.
  That is, for every $k \in \nat$ there exists $n \in \nat$ such that $k_n = k$.
  Let $k \in \nat$.
  For every $n \in \nat$, the set $c[\overline{K}_{\!n}]$ 
  is the image under $c$ of all natural numbers that have not yet been picked
  as a $k_i$ for $0\le i\le n-1$.
  Since $w_0,w_1,\ldots$ is an enumeration of $\Gamma^*$,
  there exists $n \in \nat$ such that $c(k) \in W_n$.
  Hence $k \not\in \overline{K}_{\!n}$
  since $c[\overline{K}_{\!n}] \cap W_n = \emptyset$ by property~\eqref{it:intersect}.
  Thus $k$ will be picked eventually.
  
  We define the encoding $d : \nat \to \Gamma^*$ by
  \begin{align*}
    d(i) = \lim_{n \to \infty} d_n(i) && (i\in\nat)
  \end{align*}
  These limits are well-defined
  since, for every $i\in\nat$:
  there exists $n \in \nat$ with $i = k_n$,
  and we have
  $d_m(i) = c(i)$ for all $m \le n$ and
  $d_m(i) = w_n$ for all $m > n$.

  From the above, it follows that the function $d$ has the following property:
  \begin{align}
    d(k_n) = w_n \quad \text{for all $n\in\nat$}\;.
    \label{eq:d}
  \end{align} 
  We now show that $d$ is indeed a bijection.
  Since by construction, every $w_0,w_1,w_2,\ldots$ 
  occurs precisely once in the image of $d$,
  $d$ is injective.
  Moreover, we have seen above, that the set $\{k_0,k,k_2,\ldots\}$, the domain of $d$, contains all natural numbers.
  Hence $d$ is a bijection.
  
  Finally, we show that 
  for all $L \subseteq \nat$,
  if $c[L]$ is relatively regular in the image $c[\nat]$,
  then $d[L]$ is a regular language.
  Let $L \subseteq \nat$ be such that $c[L]$ is relatively regular in $c[\nat]$.
  Then there exists a regular language $R$
  such that ${c[L]} = {R \cap c[\nat]}$.
  Let $m \in \nat$ be such that 
  $R = {\lang{A_m}}$.
  Then for all $n \in \nat$,
  \begin{align}
    n \in L \iff c(n) \in \lang{A_m} \,. \label{eq:L:Am}
  \end{align}
  By the above construction we have for all $w\in I_{m}$
  \begin{align*}
    w \in \lang{A'_m} \iff w \in \lang{A_m} \,.
  \end{align*}
  By definition, 
  $I_{m}$ coincides with $c[\nat]$ for all but finitely many words.
  Hence for almost all $n \in \nat$
  \begin{align}
    c(n) \in \lang{A_m} \iff c(n) \in \lang{A'_m} \label{eq:Am:Am'}
    \,.
  \end{align}
  By the definition of $d$, and property~\eqref{eq:ckn:sim:wn}, we find: 
  \begin{align*}
    d(k_n) = w_n \mathrel{\sim_n} c(k_n) && \text{for all $n\in\nat$}\,.
  \end{align*}
  Due to ${\sim_n} \subseteq {\sim_m}$ for every $n \ge m$, 
  we obtain that 
  \begin{align*}
    d(k_n) \mathrel{\sim_m} c(k_n) &&\text{for all $n\ge m$}\,,
  \end{align*}
  and hence, since $\nat = \{ k_n \mathrel{:} n\in\nat \}$ holds,
  that 
  \begin{align*}
    d(i) \mathrel{\sim_m} c(i) && \text{for almost all $i\in\nat$}\,.
  \end{align*}
  Hence for almost all $n$ we have
  \begin{align*}
    d(n) \in \lang{A'_m} &\iff c(n) \in \lang{A'_m} \\
      &\iff c(n) \in \lang{A_m} && \text{by \eqref{eq:Am:Am'}} \\
      &\iff n \in L && \text{by \eqref{eq:L:Am}.} 
  \end{align*}
  As $d$ is bijective, we obtain
  \begin{align*}
    w \in \lang{A'_m} \iff w \in d(L) 
  \end{align*}
  for almost all $w \in \Gamma^*$.
  Hence $d(L)$ differs only by finitely many elements from a regular language
  and is consequently itself regular.

\end{proof}

The following proposition states that, under certain conditions,
the bijective function $g$ constructed in Lemma~\ref{lem:inj2bij}
is computable.
Obviously, the injective function $f$ that is lifted to $g$ must be computable to start with.
Moreover, we need to be able to decide for regular languages 
whether their intersection with the image of $f$ is empty, finite or infinite.
This enables us, in case of a finite intersection, to compute this intersection,
and to decide whether the equivalence classes $E_n$ are 
attracted to the image of $f$ (and $I_n$).
This suffices to ensure computability of $g$
constructed in the proof of Lemma~\ref{lem:inj2bij}.

\begin{proposition}\label{prop:computable}
  Let $A$ be a computable, countably infinite set and let $f : A \to \Gamma^*$ be a computable injection
  such that for every regular language~$R$ (given by an automaton), 
  emptiness and finiteness of the set  $R \cap f[\Gamma^*]$ is decidable.
  There exists a computable bijection $g : A \to \Gamma^*$
  such that for all $L \subseteq A$,
  if $f[L]$ is relatively regular in the image $f[A]$,
  then $g[L]$ is a regular language.
\end{proposition}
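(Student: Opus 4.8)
The plan is to re-run the construction in the proof of Lemma~\ref{lem:inj2bij} and to check that, under the stated hypotheses, every object produced there is computable. The resulting $g$ is then literally the function constructed in that lemma, so it automatically has the required regularity property, and only its computability needs to be argued. I keep the notation of that proof: a computable enumeration $v_0, v_1, \ldots$ of $A$ (available since $A$ is a computable, countably infinite set) together with the computable injection $f$ yields a computable map $c : \nat \to \Gamma^*$, $c(i) = f(v_i)$, so it suffices to produce a computable bijection $d$ with the stated property and then set $g(v_i) = d(i)$.

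The first and decisive observation is that the emptiness hypothesis already makes the image $f[A] = c[\nat]$ a \emph{decidable} subset of $\Gamma^*$: for a single word $w$ the singleton $\{w\}$ is regular, and $w \in f[A]$ iff $\{w\} \cap f[A] \neq \emptyset$, which is decidable by assumption. Since $c$ is computable and injective, deciding membership in $c[\nat]$ lets us also compute $c^{-1}$ on its domain (search for a preimage, which is known in advance to terminate). A companion observation is that for any regular $R$ we can not merely decide, but actually \emph{compute}, the finite set $R \cap f[A]$ whenever it is finite: enumerate $c(0), c(1), \ldots$, keep those lying in $R$, and after collecting $x_1, \ldots, x_t$ test emptiness of $(R \setminus \{x_1, \ldots, x_t\}) \cap f[A]$ (the language $R \setminus \{x_1, \ldots, x_t\}$ is again regular); by finiteness this residual is eventually empty and the procedure halts with the full set.

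With these two tools I would verify effectiveness of the three ingredients of the induction step, for each $n$ in turn. For (iii), the automata $A'_0, \ldots, A'_{n-1}$ are concrete objects already in hand, so $\sim_n$ and each class $[u]_{\sim_n}$ are regular with explicitly computable automata, and $u \sim_n u'$ is decidable. To obtain $A'_n$ I run the proof of Lemma~\ref{lem:modify} effectively with $A = A_n$, $I = I_{n+1}$, $E = E_n$: detecting a non-attracted block amounts to deciding, for the regular set $C'_{i,j}$, whether $C'_{i,j} \cap I_{n+1}$ is finite (the finiteness oracle, adjusted by the finitely many points in which $I_{n+1}$ differs from $f[A]$) and, in that case, whether $C'_{i,j} \setminus I_{n+1} \neq \emptyset$ (decidable, since $C'_{i,j} \setminus I_{n+1} = C'_{i,j} \setminus (C'_{i,j} \cap I_{n+1})$ removes only a computable finite set from a regular language); the repair step replaces the accepted language by its union with the regular set $C'_{i,0} \setminus I_{n+1}$, an automaton for which is built by deleting the computable finite set $C'_{i,0} \cap I_{n+1}$. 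Termination of the repeat-loop is guaranteed by Lemma~\ref{lem:modify} itself. For (i), given decidable $\sim_n$ and decidable membership in $c[\overline{K}_{\!n}]$ (decide $w \in c[\nat]$, compute $c^{-1}(w)$, test whether it lies in the finite set $K_n$), I compute $k_n = c^{-1}(\min([w_n]_{\sim_n} \cap c[\overline{K}_{\!n}]))$ by scanning $w_0, w_1, \ldots$ for the first word passing both tests; this search halts because the set is nonempty, which is exactly the well-definedness underlying~\eqref{eq:ckn:sim:wn} established in Lemma~\ref{lem:inj2bij}. For (ii), the maps $d_n$ are then computable directly from $k_0, \ldots, k_{n-1}$ and $c$.

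Finally, to compute $d(i) = \lim_n d_n(i)$ I would run the induction, producing $k_0, k_1, \ldots$, until the index $i$ is selected as some $k_n$ — which happens after finitely many steps by the argument in Lemma~\ref{lem:inj2bij} that every natural number is eventually picked — and then output $d(i) = w_n$ in accordance with~\eqref{eq:d}; hence $d$, and therefore $g$, is computable. I expect the main obstacle to be precisely the two effectiveness observations of the second paragraph: turning the nonconstructive hypotheses (emptiness and finiteness of $R \cap f[A]$) into the ability to \emph{produce} witnesses — a decision procedure for image membership and an algorithm computing finite intersections — on which the effective execution of Lemma~\ref{lem:modify} and the search for $k_n$ entirely rest. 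Everything else is bookkeeping together with standard closure constructions on automata, and correctness (as opposed to computability) is inherited verbatim from Lemma~\ref{lem:inj2bij}.
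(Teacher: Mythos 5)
Your proposal is correct and follows essentially the same route as the paper, whose proof consists of a single sentence asserting that ``close inspection'' of the proofs of Lemmas~\ref{lem:modify} and~\ref{lem:inj2bij} shows the construction preserves computability under the stated hypotheses and with computable enumerations. You have simply carried out that inspection explicitly, and your two effectiveness observations --- deriving decidability of membership in $f[A]$ from the emptiness oracle via singleton languages, and computing finite intersections $R \cap f[A]$ by interleaving enumeration with emptiness tests on residuals --- are precisely the details the paper leaves to the reader.
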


\begin{proof}
  By close inspection of the proofs of Lemmas~\ref{lem:modify} and~\ref{lem:inj2bij}
  it can be established that the construction of the encoding~$g$ from the encoding~$f$
  preserves computability under the stated conditions and by choosing computable enumerations
  $v_0,v_1,\ldots$ and $w_0,w_1,\ldots$ of $A$ and $\Gamma^*$, respectively.
\end{proof}

We are ready to state our main result.
\begin{theorem}\label{thm:main}
  Let $\Sigma$ and $\Gamma$ be finite alphabets, with $\card{\Gamma} \ge 2$.
  Let $\mcl{L} \subseteq \powerset{\Sigma^*}$ be a countable set of formal languages.
  There exists a bijective function $g : \Sigma^* \to \Gamma^*$ 
  such that for every $L \in \mcl{L}$,
  the image $g[L]$ is a regular language.
\end{theorem}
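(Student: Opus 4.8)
The plan is to obtain the bijection $g$ by simply chaining together the two lemmas already established, since the real work has been split between them. Lemma~\ref{lem:inj:reg:crea} produces an injective encoding whose images are regular \emph{relative to} its own image, and Lemma~\ref{lem:inj2bij} upgrades any such injection into a genuine bijection that converts relative regularity into outright regularity. The theorem is therefore essentially a composition of these two results.

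Concretely, I would first apply Lemma~\ref{lem:inj:reg:crea} to the given countable class $\mcl{L} \subseteq \powerset{\Sigma^*}$ and the alphabets $\Sigma$, $\Gamma$ (using the hypothesis $\card{\Gamma} \ge 2$). This yields an injective function $f : \Sigma^* \to \Gamma^*$ such that for every $L \in \mcl{L}$ the image $f[L]$ is relatively regular in $f[\Sigma^*]$. I would then invoke Lemma~\ref{lem:inj2bij} with $A = \Sigma^*$ and this injection $f$: as $\Sigma$ is a finite non-empty alphabet, $\Sigma^*$ is countably infinite, so the hypothesis of the lemma is met, and it supplies a bijection $g : \Sigma^* \to \Gamma^*$ with the property that for every $L \subseteq \Sigma^*$, if $f[L]$ is relatively regular in $f[\Sigma^*]$, then $g[L]$ is regular.

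Finally, I would combine the two steps: for any $L \in \mcl{L}$, the first step guarantees that $f[L]$ is relatively regular in $f[\Sigma^*]$, whence the second step yields that $g[L]$ is regular; and $g$ is bijective by construction. I expect no genuine obstacle here, since both lemmas do all the heavy lifting. The only points worth checking are that $\Sigma^*$ is countably infinite, so that Lemma~\ref{lem:inj2bij} is applicable, and that the domain $A = \Sigma^*$ required by Lemma~\ref{lem:inj2bij} matches the domain of the injection delivered by Lemma~\ref{lem:inj:reg:crea}; both are immediate.
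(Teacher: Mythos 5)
Your proposal is correct and is exactly the paper's proof: the paper proves Theorem~\ref{thm:main} by the same composition of Lemma~\ref{lem:inj:reg:crea} (injective encoding with relatively regular images) and Lemma~\ref{lem:inj2bij} (upgrading the injection to a bijection, with $A = \Sigma^*$ countably infinite). Your additional checks on the applicability of Lemma~\ref{lem:inj2bij} are accurate and merely make explicit what the paper leaves implicit.
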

\begin{proof}
  By Lemmas~\ref{lem:inj:reg:crea} and~\ref{lem:inj2bij}.
\end{proof}

The following corollary justifies the title of this paper.

\begin{corollary}\label{cor:puzzle}
  There exists a computable bijective function $f : \Sigma^* \to \Gamma^*$
  such that the image function $f[\_]$ is regularity preserving,
  but the preimage function $f^{-1}[\_]$ is not. 
\end{corollary}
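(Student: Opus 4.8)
The plan is to obtain the corollary from Theorem~\ref{thm:main} by one carefully chosen instance of the class $\mcl{L}$, and then to use bijectivity to convert image-regularity into a failure of preimage-regularity. Fix alphabets $\Sigma,\Gamma$ with $\card{\Gamma}\ge 2$, let $\mcl{L}_0\subseteq\powerset{\Sigma^*}$ be the class of \emph{all} regular languages over $\Sigma$, and fix a single recursive, non-regular language $N\subseteq\Sigma^*$ (e.g.\ $N=\{a^m b^m\where m\in\nat\}$ when $\card{\Sigma}\ge 2$, or $N=\{a^{2^m}\where m\in\nat\}$ when $\Sigma$ is a singleton). Each regular language has a finite description, so $\mcl{L}_0$ is countable, whence $\mcl{L}\defd\mcl{L}_0\cup\{N\}$ is countable as well.

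Applying Theorem~\ref{thm:main} to $\mcl{L}$ yields a bijection $f:\Sigma^*\to\Gamma^*$ with $f[L]$ regular for every $L\in\mcl{L}$; this $f$ already witnesses the statement except for computability. For preservation of $f[\_]$: any regular $L\subseteq\Sigma^*$ lies in $\mcl{L}_0\subseteq\mcl{L}$, so $f[L]$ is regular, and hence $f[\_]$ preserves regularity. For the failure of $f^{-1}[\_]$: set $R\defd f[N]$, which is regular since $N\in\mcl{L}$. As $f$ is a bijection, $f^{-1}[R]=f^{-1}[f[N]]=N$, which is not regular. Thus $f^{-1}[\_]$ sends the regular language $R$ to the non-regular language $N$, and so does not preserve regularity.

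To upgrade $f$ to a \emph{computable} bijection, I would re-run the proof of Theorem~\ref{thm:main} with the appeal to Lemma~\ref{lem:inj2bij} replaced by Proposition~\ref{prop:computable}. Starting from the injection $f_0:\Sigma^*\to\Gamma^*$ that Lemma~\ref{lem:inj:reg:crea} produces for $\mcl{L}$, I check the hypotheses of Proposition~\ref{prop:computable}: the domain $\Sigma^*$ is computable, and $f_0$ is computable provided $\mcl{L}$ is presented with a uniformly computable membership test. The latter is arranged by enumerating all DFAs over $\Sigma$ (giving $\mcl{L}_0$ with a uniformly computable family of characteristic functions) and appending a fixed decider for $N$; then $f_0(v_n)=\chi_1(v_n)\cdots\chi_n(v_n)$ is computable. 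Proposition~\ref{prop:computable} then yields a computable bijection $g$ with $g[L]$ regular whenever $f_0[L]$ is relatively regular in $f_0[\Sigma^*]$, hence for every $L\in\mcl{L}$ by Lemma~\ref{lem:inj:reg:crea}; the argument of the previous paragraph, applied to $g$, completes the proof.

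The step I expect to be the main obstacle is the last hypothesis of Proposition~\ref{prop:computable}: decidability of emptiness and finiteness of $R\cap f_0[\Sigma^*]$ for every regular $R$, which feeds the computable version of the attraction test in Lemma~\ref{lem:modify}. By construction $\length{f_0(v_n)}=n$, so $f_0[\Sigma^*]$ contains exactly one, computably obtainable, word of each length and is therefore a decidable subset of $\Gamma^*$; this already makes $R\cap f_0[\Sigma^*]$ decidable \emph{as a set} and its emptiness semi-decidable. Deciding \emph{finiteness} is the genuinely delicate point, since finiteness of a decidable set is not decidable in general. I would address it by exploiting the one-word-per-length structure of the image together with a uniformly computable choice of the two enumerations, arranging that for each regular $R$ the predicate ``$f_0(v_n)\in R$'' is decidably either eventually false or infinitely often true; verifying this is where the real work of the computability claim lies.
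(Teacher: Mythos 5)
You have correctly identified where the difficulty lies, but you have not crossed it, and the route you chose makes it essentially uncrossable. Your first two paragraphs (apply Theorem~\ref{thm:main} to $\mcl{L}_0\cup\{N\}$, use bijectivity to get $f^{-1}[f[N]]=N$) reproduce exactly the argument the paper gives \emph{before} its proof of Corollary~\ref{cor:puzzle}, where it is explicitly noted that the function obtained this way is not computable. The entire content of the corollary is the computability claim, and your plan for it --- feed the generic injection $f_0$ of Lemma~\ref{lem:inj:reg:crea} (for $\mcl{L}=$ all regular languages plus $N$) into Proposition~\ref{prop:computable} --- founders on the proposition's hypothesis: decidability of emptiness and finiteness of $R\cap f_0[\Sigma^*]$ for every regular $R$. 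You concede this is unverified (``where the real work of the computability claim lies''), and it is not a loose end but a genuine gap: the letters of $f_0(v_n)$ are the characteristic values $\chi_1(v_n),\ldots,\chi_n(v_n)$ of \emph{all} the languages in the enumeration, so whether $f_0(v_n)\in\lang{A}$ holds for infinitely many $n$ is a $\Pi^0_2$-type question about the joint behavior of infinitely many deciders, and the one-word-per-length structure only gives you a decidable membership test for the image --- which, as you yourself note, does not even yield decidable emptiness of $R\cap f_0[\Sigma^*]$, let alone decidable finiteness. No choice of enumerations is shown (or is likely) to make this uniform in $R$.

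The paper's proof avoids this entirely by \emph{not} using the generic encoding: it hand-picks the concrete computable injection $f(w)=b(w)\,w$, where $b(w)$ records whether $w$ is balanced. Then (i) $f[L]$ is relatively regular in $f[\Sigma^*]$ for every regular $L$ (an automaton skips the first letter), (ii) the fiber $X=\{w\mid b(w)=1\}$ has $f[X]$ relatively regular in $f[\Sigma^*]$ while $f^{-1}[f[X]]=X$ is not regular, and --- crucially --- $f[\Sigma^*]$ is context-free, so $R\cap f[\Sigma^*]$ is context-free for every regular $R$, and emptiness and finiteness of context-free languages are classically decidable. That is precisely the decidability input Proposition~\ref{prop:computable} needs, and the proposition then lifts $f$ to a computable bijection $g$ with $g[\_]$ regularity preserving and $g^{-1}[\_]$ not. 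If you want to salvage your approach, you must either replace $f_0$ by an injection whose image has decidable intersection-finiteness against regular sets (the paper's $b(w)\,w$ being the model case), or prove the missing decidability for your specific $f_0$; as it stands the proposal establishes only the non-computable weakening of the corollary.
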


Without the requirement on the function $f$ to be computable, we could prove the statement as follows:
Let $\mcl{L} \subseteq \powerset{\Sigma^*}$ be the (countable) set of all recursive languages over $\Sigma$.
By Theorem~\ref{thm:main}
there exists a bijective mapping $f : \Sigma^* \to \Gamma^*$ such that 
$f[L]$ is regular for all $L \in \mcl{L}$.
Then, clearly, $f[\_]$ is regularity preserving while $f^{-1}[\_]$ is not.
However, the function $f$ obtained in this way is not computable. 
To obtain a computable function, we argue as follows.

\begin{proof}[Proof of Corollary~\ref{cor:puzzle}]
  Let $\Sigma = \{0,1\}$.
  We define a `balancedness' function $b : \Sigma^* \to \Sigma$ for all $w \in \Sigma^*$ by
  $b(w) = 1$ if the word $w$ contains an equal number of zeros and ones,
  and $b(w) = 0$ otherwise.
  Then we define 
  $f : \Sigma^* \to \Sigma^*$ by
  \begin{align*}
    f(w) = b(w) w 
  \end{align*}
  for every $w \in \Sigma^*$.
  Then we have:
  \begin{enumerate}
    \item 
      For every regular language $L \subseteq \Sigma^*$,
      we have that $f[L]$ is relatively regular in $f[\Sigma^*]$ 
      (an automaton can simply ignore the first letter).
    \item 
      However, the function $f^{-1}[\tvs]$ is not preserving (relative) regularity.
      To see this, let $X = \{w \mid b(w) = 1\}$.
      Clearly $f^{-1}[f[X]] = X$ is not regular.
      But $f[X]$ is relatively regular in $f[\Sigma^*]$, 
      since $f[X]$ consists precisely of those words in $f[\Sigma^*]$ that start with letter $1$.
  \end{enumerate}

  We now invoke Proposition~\ref{prop:computable} for lifting $f$ to a computable, bijective $g$.
  The conditions of the proposition are satisfied since $f$ is computable,
  and the image $f[\Sigma^*]$ of $f$ is a context-free language.
  The intersection of context-free languages with regular languages
  is context-free, and finiteness and emptiness are decidable.
  The proposition guarantees the existence of a computable, bijective function $g : \Sigma^* \to \Sigma^*$
  such that for every $L \subseteq \Sigma^*$ that is relatively regular in $f[\Sigma^*]$ 
  we have that $g[L]$ is regular.
  Then by (i) we have that $g[\tvs]$ preserves regularity.
  From (ii) it follows that $g[X]$ is regular while $g^{-1}[g(X)] = X$ is not.
  Hence $g^{-1}[\tvs]$ does not preserve regularity.
\end{proof}

We strengthen the statement of Corollary~\ref{cor:puzzle} by extending it to the preservation
of membership in countable classes of languages that include the regular languages.  
For this, we use the following stipulation.
For an arbitrary set~$\mathcal{S}$ of languages over $A$, we say
that a function $F : \powerset{A^*} \to \powerset{B^*}$ preserves membership in~$\mathcal{S}$
if, for all languages over $A$, $L\in\mathcal{S}$ implies $F(L)\in\mathcal{S}$. 

\begin{corollary}
  Let $\mathcal{S} \subseteq \powerset{\Sigma^*}$ be a countable set of languages that includes the regular languages.
  Then there exists a bijective function $f : \{0,1\}^* \to \{0,1\}^*$
  such that $f[\_]$ preserves membership in $\mathcal{S}$, 
  but $f^{-1}[\_]$ does not.
\end{corollary}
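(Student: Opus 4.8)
The plan is to reuse the construction of Corollary~\ref{cor:puzzle} almost verbatim, but to feed a richer class of languages into Theorem~\ref{thm:main} via Proposition~\ref{prop:computable}. The key observation is that the only property of ``regular'' that Corollary~\ref{cor:puzzle} exploits is that the class is closed under the operations an automaton performs when it ignores a prefix letter. So first I would apply Theorem~\ref{thm:main} (in the non-computable form, which is all we need here since computability is not demanded in the statement) to the countable class $\mcl{S}$, obtaining a bijective $f : \{0,1\}^* \to \{0,1\}^*$ such that $f[L]$ is regular for every $L \in \mcl{S}$. But this alone gives the wrong direction: it only tells us images of $\mcl{S}$-languages are regular, not that they stay in $\mcl{S}$.

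The cleaner route is to compose. First I would take the computable bijection $g : \{0,1\}^* \to \{0,1\}^*$ produced in the proof of Corollary~\ref{cor:puzzle}, for which $g[\_]$ preserves regularity while $g^{-1}[\_]$ does not (witnessed by the balancedness set $X$). Since $\mcl{S}$ includes the regular languages, for any $L \in \mcl{S}$ I want $g[L] \in \mcl{S}$; because $g[L]$ is regular whenever $L$ is regular, and regular languages lie in $\mcl{S}$, the preservation of membership in $\mcl{S}$ holds automatically on the regular fragment. For the non-regular members of $\mcl{S}$, however, $g$ offers no guarantee, so I must instead drive the construction by $\mcl{S}$ itself. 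Concretely, I would invoke Theorem~\ref{thm:main} with the class $\mcl{S}$ together with the preimages $f_0^{-1}[M]$ for a fixed auxiliary encoding $f_0$, arranging that the resulting bijection maps each $L \in \mcl{S}$ to a regular language; since regular languages are in $\mcl{S}$, membership in $\mcl{S}$ is preserved.

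Spelling out the construction I would actually commit to: define $b$ and $f(w) = b(w)\,w$ exactly as in the proof of Corollary~\ref{cor:puzzle}, and let $g$ be the computable bijective lifting of $f$ from Proposition~\ref{prop:computable}. By item~(i) of that proof, $g[\_]$ preserves regularity; since $\mcl{S}$ contains all regular languages, any regular $L$ yields regular, hence $\mcl{S}$-membership, $g[L]$. The point is that $g[\_]$ in fact sends \emph{every} $L$ that is relatively regular in $f[\Sigma^*]$ to a regular language; so for each $L \in \mcl{S}$ I need $g[L]$ regular, which forces me to ensure every $L \in \mcl{S}$ is relatively regular in $f[\Sigma^*]$ under the chosen encoding. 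This is the one genuine subtlety, and it is handled by choosing the injective encoding for $\mcl{S}$ by Lemma~\ref{lem:inj:reg:crea} (applied to the countable class $\mcl{S}$) before lifting, so that $g[L]$ is regular, and therefore in $\mcl{S}$, for all $L \in \mcl{S}$ simultaneously. Finally, for the failure of the reverse direction, I reuse the witness $X = \{w \mid b(w)=1\}$: since $f^{-1}[f[X]] = X$ is non-regular, it is a member of $\mcl{S}$ only if $X \in \mcl{S}$, but we may pick $X \notin \mcl{S}$, giving $g^{-1}[g[X]] = X \notin \mcl{S}$ while $g[X] \in \mcl{S}$, so $f^{-1}[\_]$ fails to preserve membership in $\mcl{S}$.

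The main obstacle I anticipate is reconciling the two roles of $\mcl{S}$: the encoding must be tuned to $\mcl{S}$ so that \emph{all} its members land on regular (hence $\mcl{S}$) images, yet the same encoding must leave a fixed non-$\mcl{S}$ set such as $X$ with a non-$\mcl{S}$ preimage. The resolution is that Lemma~\ref{lem:inj:reg:crea} and Lemma~\ref{lem:inj2bij} let us make $g[L]$ regular for every $L$ in the countable class we feed in, and we feed in exactly $\mcl{S}$; the witness $X$ need not be in $\mcl{S}$, so no conflict arises. The remaining checks—that the composite encoding is bijective and that the balancedness gadget still separates the two directions—are routine consequences of the lemmas already proved.
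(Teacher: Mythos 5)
Your forward direction is fine and coincides with the paper's: apply Theorem~\ref{thm:main} to the countable class $\mathcal{S}$, so every $L \in \mathcal{S}$ gets a regular image, and regular images lie in $\mathcal{S}$ by hypothesis. The gap is in the failure direction for $f^{-1}[\_]$. You abandon the balancedness encoding $f(w) = b(w)w$ in favour of the encoding driven by Lemma~\ref{lem:inj:reg:crea} applied to $\mathcal{S}$ (``we feed in exactly $\mathcal{S}$''), yet you keep the balancedness set $X = \{w \mid b(w) = 1\}$ as your witness. This fails twice over. First, that specific $X$ is a fixed context-free language and may perfectly well belong to $\mathcal{S}$ (take $\mathcal{S}$ to be the context-free languages, say), so ``we may pick $X \notin \mathcal{S}$'' is not available once $X$ is tied to the gadget. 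Second, and more fundamentally: for \emph{any} choice of $X \notin \mathcal{S}$, witnessing the failure of $g^{-1}[\_]$ requires $g[X] \in \mathcal{S}$, and nothing in your construction ensures this --- the encoding was tuned only to the members of $\mathcal{S}$, so $g[X]$ need not be regular, nor in $\mathcal{S}$ at all. Nor can you argue abstractly that some witness must exist: if $\mathcal{S}$ is exactly the regular languages, a bijection sending every member of $\mathcal{S}$ to a regular language could a priori have a membership-preserving preimage function as well (the identity has both properties), so the witness must be engineered into the encoding, not hoped for afterwards.

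The repair is a one-line change to your first step, and it is exactly the paper's proof: pick any $X \notin \mathcal{S}$ (which exists since $\mathcal{S}$ is countable while $\powerset{\Sigma^*}$ is not) and apply Theorem~\ref{thm:main} to the still-countable class $\mathcal{S}' = \mathcal{S} \cup \{X\}$. The resulting bijection $f$ makes $f[L]$ regular for all $L \in \mathcal{S}'$; in particular $f[X]$ is regular, hence $f[X] \in \mathcal{S}$, while $f^{-1}[f[X]] = X \notin \mathcal{S}$. Note also that your detour through the computable machinery of Corollary~\ref{cor:puzzle} and Proposition~\ref{prop:computable} is a red herring here: the statement does not ask for computability, and the balancedness gadget cannot serve in any case, since it only makes images of \emph{regular} languages relatively regular in $f[\Sigma^*]$, whereas $\mathcal{S}$ may contain non-regular members whose images must also land in $\mathcal{S}$.
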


\begin{proof}
  Let $\mathcal{S}' = \mcl{S} \cup \{X\}$ where $X$ is a language
  not in $\mcl{S}$.
  By Theorem~\ref{thm:main}
  there exists a bijective function $f : \Sigma^* \to \Gamma^*$ such that 
  $f[L]$ is regular for all $L \in \mcl{S}'$.
  Then, $f[\_]$ preserves membership in $\mcl{S}$ while $f^{-1}[\_]$;
  note that
  $f(X)$ is regular and hence $f(X) \in \mcl{S}$, but $f^{-1}[f(X)] = X \not\in \mcl{S}$.
\end{proof}

\section{Consequences for Comparing \\ Models of Computation}\label{sec:moc}

It turns out that our main results have some remarkable consequences in the context
of comparing computational models 
using concepts proposed by Boker and Dershowitz
in a series of publications \cite{boke:2004,boke:ders:2006,boke:2008}.
The authors summarize their goal as follows:  
\begin{quotation}
  ``We seek a robust definition of relative power that does not itself depend 
    on the notion of computability. 
    It should allow one to compare arbitrary models over arbitrary domains 
    via a quasi-ordering that successfully captures the intuitive concept 
    of computational strength.
    [\ldots]\footnote{%
      This passage continues:
        ``Eventually, we want to be able to prove statements like 
        `analogue machines are strictly more powerful than digital devices', 
        even though the two models operate over domains of different cardinalities.''}\hspace*{1pt}''
     \cite{boke:ders:2006}     
\end{quotation}

This motivation leads them to specific choices of simple and liberal 
conditions on encodings.
Encodings are typically used to translate between models of computation that act on different domains.
So an encoding~$\rho : D_1 \to D_2$ can facilitate 
the simulation of the input-output behavior of a machine $M_1$ belonging to a model $\mcl{M}_1$ with domain~$D_1$
by a machine $M_2$ from a model $\mcl{M}_2$ with domain~$D_2$\,:
\begin{center}
  \begin{tikzpicture}[baseline=2ex,node distance=25mm]
    \node (l) {$D_1$};
    \node (r) [right of=l] {$D_2$};
    \node (l') [below of=l,node distance=15mm] {$D_1$};
    \node (r') [right of=l'] {$D_2$};
    \draw[->] (l) -- node [above] {$\rho$} (r);
    \draw[->] (l') -- node [below] {$\rho$} (r');
    \draw[->] (l) -- node [left] {$M_1$} (l');
    \draw[->] (r) -- node [right] {$M_2$} (r');
  \end{tikzpicture}
\end{center}

To prevent codings from participating too strongly in
the simulation of a computation on a machine $M_1$ 
through a computation on a machine $M_2$ 
(and thereby from substantially alleviating, for the simulating machine $M_2$, 
the task that is solved by the simulated machine $M_1$),
codings are usually required to be computable in some sense.
Frequently, one of the following two restrictions are stipulated
(see for example Rogers' classic book \cite[p.\hspace*{1pt}27,\hspace*{1pt}28]{roge:1967}):
\begin{enumerate}\renewcommand{\labelenumi}{(\arabic{enumi})}
  \item{}\label{coding:effective}
    Codings must be `informally algorithmic', `informally computable',
    or `effective' in the sense that they can be carried out by an in principle mechanizable procedure.     
  \item{}\label{coding:computable}
    Codings are required to be computable with respect to a specific model,
    for example by a Turing machine.  
\end{enumerate}
Boker and Dershowitz reject such prevalent stipulations:
\begin{quotation}
     ``Effectivity is a useful notion; however, it is unsuitable as a general power comparison notion.
     The first, informal approach is too vague, while the second can add computational power 
     when dealing with subrecursive models and is inappropriate when dealing with non-recursive models.''
     \cite{boke:2008}
\end{quotation}
As a consequence, they go on to use classes of encodings 
that do not constrain (at least not explicitly) the cost that is necessary to compute an encoding.  
In particular, they define three concepts of comparison 
(see Definition~\ref{def:mod:sim} below)
that are, respectively, based on:\hspace*{0.5pt}%
  \footnote{%
    Note that encodings of the notions~(i) and (ii) here are more liberal
    than those in (1) 
                  and (2) 
    above, and that therefore the use of such encodings does not address 
    the concern raised in the preceding quotation 
    regarding the addition of computational power when dealing with subrecursive models.} 
\begin{enumerate}
  \item 
    encodings (injective functions) without any additional requirement;
  \item     
     encodings that are `decent' with respect to the simulating model~$\mcl{M}_2$ that shoulders the simulation,
     in the sense that $\mcl{M}_2$ is able to recognize the image of the coding;
   \item 
     bijective encodings.
\end{enumerate}

We will show that each of these concepts admits some quite counterintuitive consequences.
At first these anomalies pertain only to decision models,
the subclass of all models that only obtain `yes'/`no' as computation result.
But it turns out these phenomena apply also to more broad classes of models.

In order to formally state our results, we repeat here the basic definitions in \cite{boke:ders:2006,boke:2008},
and extend them by straightforward adaptations for decision models. 

By abstracting away from all intensional aspects of models of computation
that concern mechanistic aspects of stepwise computation processes, 
Boker and Dershowitz define a model extensionally as an arbitrary set of 
(extensionally represented) partial functions over some domain.  
  
\begin{definition}[{\cite[Def.~2.1]{boke:ders:2006}}]\label{def:moc}
  A \emph{model of computation} is a pair $\mcl{M} = \pair{D}{\mcl{F}}$,
  where $D$ is a set of elements, 
  the \emph{domain of $\mcl{M}$},
  and $\mcl{F} = \{ f \where f : D \to D \cup \{ \bot \} \}$ is a set of functions
  with $\bot \not\in D$.
  We write $\mrm{dom}_\mcl{M}$ for the domain of $\mcl{M}$.
  (We assume that $\bot$ is a fixed element not contained in the domain of any model.)
\end{definition}

We define decision models as models of computation 
consisting of total functions that yield a definite `yes'/`no' answer.

\begin{definition}\label{def:mod}
  A \emph{decision model} (model of computation for decision models) 
  is a model of computation $\mcl{M} = \pair{D}{\mcl{F}}$,
  such that $\{0,1\} \subseteq D$,
  and $f[D] \subseteq \{0,1\}$, for all $f \in \mcl{F}$.
\end{definition}

Now codings between domains of models are defined. 

\begin{definition}[{\cite[Def.~2.2]{boke:ders:2006}}]\label{def:coding}
  Let $D_1$ and $D_2$ be domains of models of computation.
  A \emph{coding (from $D_1$ to $D_2$)} 
  is an injective function $\rho : D_1 \cup \{\bot\} \to D_2 \cup \{\bot\}$
  such that $\rho(x) = \bot$ if and only if $x = \bot$, for all $x \in D_1$.
\end{definition}

For codings $\rho$ between decision models it could be desirable to
demand that $\rho(0) = 0$ and $\rho(1) = 1$. 
We do not to take up this restriction, for a pragmatic reason
connected to the definition of `simulation' immediately below. 
If namely a non-constant function $f$ in a decision model $\mcl{M}_1$
is simulated via $\rho$
by a function $g$ in a decision model $\mcl{M}_2$,
then it follows that either $\rho(0) = 0$ and $\rho(1) = 1$, 
or $\rho(0) = 1$ and $\rho(1) = 0$.
In both cases it can be said that decisions taken by $f$
are faithfully modelled by corresponding decisions taken by~$g$.  

\begin{definition}[{\cite[Def.~2.3]{boke:ders:2006}}]\label{def:fun:sim}
  Let $\mcl{M}_1 = \pair{D_1}{\mcl{F}_1}$
  and $\mcl{M}_2 = \pair{D_2}{\mcl{F}_2}$
  be models of computation.
  Let $\rho$ be a coding from $D_1$ to $D_2$.
  We define:
  \begin{enumerate}
    \item 
      For $g \in \mcl{F}_2$ and $f \in \mcl{F}_1$ we say that 
      $g$ \emph{simulates} $f$ \emph{via} $\rho$
      if $g \circ \rho = \rho \circ f$ holds, 
      as in the following diagram:
      \begin{center}
        \begin{tikzpicture}[baseline=2ex,node distance=25mm]
          \node (l) {$D_1$};
          \node (r) [right of=l] {$D_2$};
          \node (l') [below of=l,node distance=15mm] {$D_1$};
          \node (r') [right of=l'] {$D_2$};
          \draw[->] (l) -- node [above] {$\rho$} (r);
          \draw[->] (l') -- node [below] {$\rho$} (r');
          \draw[->] (l) -- node [left] {$f$} (l');
          \draw[->] (r) -- node [right] {$g$} (r');
        \end{tikzpicture}
      \end{center}

    \item 
      $\mcl{M}_2$ \emph{simulates} $\mcl{M}_1$ \emph{via} $\rho$,
      denoted by $\mcl{M}_1 \lesssim_\rho \mcl{M}_2$,
      if for every $f \in \mcl{F}_1$ there is a
      $g \in \mcl{F}_2$ such that $g$ simulates $f$ via $\rho$.
  \end{enumerate}
\end{definition}

The `decency' requirement for codings mentioned before is defined
as follows in \cite{boke:2008}. There, Boker stresses that this requirement
follows classic definitions of computable groups 
by Rice \cite[p.\hspace*{1.5pt}298]{rice:1956} and Rabin \cite[p.\hspace*{1.5pt}343]{rabi:1960}.

\begin{definition}[{\cite[Def.~52]{boke:2008}}]\label{def:decent}
  Let $\mcl{M}_1 = \pair{D_1}{\mcl{F}_1}$
  and $\mcl{M}_2 = \pair{D_2}{\mcl{F}_2}$
  be models of computation.
  A coding $\rho$ from $D_1$ to $D_2$
  is called \emph{decent} with respect to $\mcl{M}_2$
  if the image $\rho[D_1]$ can be recognized by $\mcl{M}_2$, 
  in the sense that
  there is a total function~$g \in \mcl{F}_2$ 
  such that 
  $\rho[D_1] = g[D_2]$ and
  for all $y \in D_2$, we have $g(y) = y$ if and only if $y \in \rho[D_1]$.
\end{definition}

We note that according to Definition~\ref{def:decent}
a coding~$\rho$ 
can be decent with respect to a decision model~$\mcl{M}_2$ only if
$\rho[D_1] \subseteq \{0,1\}$.
Since $\rho$ is injective, $\card{D_1} \le 2$ follows, and so $\mcl{M}_1$ can only be a rather trivial model.   
Therefore we adapt the notion of decency in an obvious way
to accommodate decision models.

\begin{definition}\label{def:decent:*}
  Let $\mcl{M}_1 = \pair{D_1}{\mcl{F}_1}$
  and $\mcl{M}_2 = \pair{D_2}{\mcl{F}_2}$
  be models of computation.
  A coding $\rho$ from $D_1$ to $D_2$
  is called \emph{decent$^*$} with respect to $\mcl{M}_2$
  if the image $\rho[D_1]$ can be recognized by $\mcl{M}_2$, 
  in the sense that
  there is a total function~$g \in \mcl{F}_2$ and an element $d \in D_2$ 
  such that for all $y \in D_2$, we have $g(y) = d$ if and only if $y \in \rho[D_1]$.
\end{definition}

With the concepts `model of computation' and `simulation' defined,
Boker and Dershowitz introduce three comparison preorders for models,
which are based on three classes of codings as mentioned above.
In addition to the preorder induced by decent codings,
we also define a variant preorder induced by decent$^*$ codings.

\begin{definition}[{\cite[Def.\hspace*{1.5pt}52]{boke:2008}}]\label{def:mod:sim}
  Let $\mcl{M}_1 = \pair{D_1}{\mcl{F}_1}$
  and $\mcl{M}_2 = \pair{D_2}{\mcl{F}_2}$
  be models of computation.
  We define:
  \begin{enumerate}
    \item 
      $\mcl{M}_2$ \emph{is at least as powerful as} $\mcl{M}_1$,  
      denoted by 
      \begin{center}
        $\mcl{M}_1 \lesssim \mcl{M}_2$\,,
      \end{center}
       if $\mcl{M}_1 \lesssim_\rho \mcl{M}_2$ for some $\rho$\,.
       \\[-1.5ex]
      
    \item 
      $\mcl{M}_2$ \emph{is at least as powerful as} $\mcl{M}_1$ \emph{via a decent coding},
      which we denote by 
      \begin{center}
        $\mcl{M}_1 \lesssim_{\mrm{decent}} \mcl{M}_2$\,, 
      \end{center}
      if $\mcl{M}_1 \lesssim_\rho \mcl{M}_2$ for some decent coding~$\rho$ with respect to $\mcl{M}_2$.
      %
      $\mcl{M}_2$ \emph{is at least as powerful as} $\mcl{M}_1$ \emph{via a decent$^*$ coding},
      which we denote by 
      \begin{center}
        $\mcl{M}_1 \lesssim_{\mrm{decent}^*} \mcl{M}_2$\,,  
      \end{center}
      if $\mcl{M}_1 \lesssim_\rho \mcl{M}_2$ for some decent$^*$ coding~$\rho$ with respect to $\mcl{M}_2$.
       \\[-1.5ex]

    \item
      $\mcl{M}_2$ \emph{is at least as powerful as} $\mcl{M}_1$ \emph{via a bijective coding},
      which we denote by 
      \begin{center}
        $\mcl{M}_1 \lesssim_{\mrm{bijective}} \mcl{M}_2$\,,
      \end{center}
      if $\mcl{M}_1 \lesssim_\rho \mcl{M}_2$ for some bijective coding $\rho$\,.
  \end{enumerate}
\end{definition}

With these definitions in place, we are now able to state, and prove,
our results concerning the comparison of decision models.
For this we denote, for an alphabet $\Gamma$ with $\setexp{0,1} \subseteq \Gamma$,
by $\mocDFAsover{\Gamma} = \pair{\Gamma^*}{\mcl{D}}$ with
\begin{center}
  $
  \mcl{D} = \descsetexpBig{f \funin \Gamma^*\to\setexp{0,1}}
                          {\parbox{32ex}{$\exists M \text{ DFA} \: \forall w\in\Gamma^*$\\[-0.1  ex]
                                          \hspace*{3ex}$(\, \fap{f}{w} = 1 \Leftrightarrow \text{$M$ accepts $w$}\,)$}}
  $
\end{center}
the decision model consisting 
of all functions $f \funin \Gamma^* \to \setexp{0,1}$ that 
describe the ac\-cep\-tance\discretionary{/}{}{/}non-ac\-cep\-tance behavior 
of a DFA with input alphabet $\Gamma$.
The model $\mocTMDsover{\Gamma}$ over input alphabet $\Gamma$ of Turing-machine deciders is defined analogously.

The proposition below is an easy consequence of Lemma~\ref{lem:inj:reg:crea}.

\begin{proposition}\label{prop:moc:inj}
  Let $\Sigma$, $\Gamma$ be alphabets, where $\setexp{0,1} \subseteq \Gamma$. 
  Then for every countable decision model $\mcl{M}$ with domain~$\Sigma^*$
  \begin{equation}\label{eq:prop:moc:inj}
    \mcl{M} \; \lesssim \mocDFAsover{\Gamma} \,,
  \end{equation}
  holds, that is, deterministic finite state automata with input alphabet~$\Gamma$
  are at least as powerful as $\mcl{M}$.
\end{proposition}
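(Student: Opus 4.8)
The plan is to reduce Proposition~\ref{prop:moc:inj} to Lemma~\ref{lem:inj:reg:crea} by translating the decision-model statement into a statement about languages. First I would observe that a countable decision model $\mcl{M} = \pair{\Sigma^*}{\mcl{F}}$ determines a countable class of languages: for each $f \in \mcl{F}$, set $L_f = \descsetexpnormalsize{w \in \Sigma^*}{\fap{f}{w} = 1}$, and let $\mcl{L} = \descsetexpnormalsize{L_f}{f \in \mcl{F}}$. Since $\mcl{F}$ is countable, so is $\mcl{L}$. Applying Lemma~\ref{lem:inj:reg:crea} with this $\mcl{L}$ (using $\card{\Gamma} \ge 2$, which holds as $\setexp{0,1} \subseteq \Gamma$) yields an injective function $\rho : \Sigma^* \to \Gamma^*$ such that $\rho[L_f]$ is relatively regular in $\rho[\Sigma^*]$ for every $f \in \mcl{F}$; that is, for each $f$ there is a regular $R_f$ with $\rho[L_f] = R_f \cap \rho[\Sigma^*]$.

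Next I would verify that $\rho$, suitably extended to $\bot \mapsto \bot$, is a coding in the sense of Definition~\ref{def:coding}, and that it witnesses $\mcl{M} \lesssim \mocDFAsover{\Gamma}$. To establish the simulation $\mcl{M} \lesssim_\rho \mocDFAsover{\Gamma}$, I must exhibit, for each $f \in \mcl{F}$, a function $g \in \mcl{D}$ with $g \circ \rho = \rho \circ f$. The natural candidate is the DFA-acceptance function $g$ of an automaton recognizing $R_f$: for $w \in \Sigma^*$ we have $\fap{g}{\rho(w)} = 1$ iff $\rho(w) \in R_f$ iff $\rho(w) \in R_f \cap \rho[\Sigma^*]$ (as $\rho(w) \in \rho[\Sigma^*]$ trivially) iff $\rho(w) \in \rho[L_f]$ iff $w \in L_f$ (by injectivity of $\rho$) iff $\fap{f}{w} = 1$.

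The one subtlety I anticipate, and the main point requiring care, is that the commuting square $g \circ \rho = \rho \circ f$ is demanded to hold on the common domain of a coding, which in Definition~\ref{def:fun:sim} means $g(\rho(w)) = \rho(f(w))$ for all $w$, where $f(w), g(\rho(w)) \in \setexp{0,1}$. Here $\rho$ is being applied both as the word-encoding $\Sigma^* \to \Gamma^*$ \emph{and} implicitly to the outputs $0,1$; since $\setexp{0,1} \subseteq \Gamma$, one naturally wants $\rho$ to act as the identity on $\setexp{0,1} \subseteq \Gamma \subseteq \Gamma^*$. I would therefore arrange, or simply note, that the values $0$ and $1$ (as one-letter words) can be taken as fixed points of the encoding, so that $\rho(f(w)) = f(w) \in \setexp{0,1}$ and the equation $g(\rho(w)) = f(w)$ coincides exactly with the acceptance equivalence established above. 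This identification of output symbols with their codes is the only place where the decision-model bookkeeping interacts nontrivially with the purely language-theoretic content supplied by Lemma~\ref{lem:inj:reg:crea}; everything else is routine. Concluding, the existence of such $\rho$ gives $\mcl{M} \lesssim_\rho \mocDFAsover{\Gamma}$, and hence $\mcl{M} \lesssim \mocDFAsover{\Gamma}$ as claimed.
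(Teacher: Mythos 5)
Your proposal is correct and takes essentially the same route as the paper: the paper's proof likewise forms the countable class $\{\,L_f \mid f \in \mcl{F}\,\}$ with $L_f = \{\,w \in \Sigma^* \mid f(w)=1\,\}$, invokes Lemma~\ref{lem:inj:reg:crea} to obtain an injective $\rho$ with each $\rho[L_f]$ relatively regular in $\rho[\Sigma^*]$, and then declares the simulation by DFA-acceptance functions ``straightforward to verify.'' The output subtlety you flag --- that the square $g \circ \rho = \rho \circ f$ forces $\rho(0),\rho(1) \in \{0,1\}$, which the $\rho$ from Lemma~\ref{lem:inj:reg:crea} need not satisfy outright and so must be \emph{arranged} (e.g.\ by a finite modification of $\rho$, which preserves injectivity and relative regularity) rather than merely noted --- is a point the paper itself glosses over, so your treatment is, if anything, slightly more careful than the original.
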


\begin{proof}
  Every decision model $\mcl{M} = \pair{\Sigma^*}{\mcl{F}}$ with domain $\Sigma^*$
  and with countable set $\mcl{F}$ of computed functions
  corresponds to the countable set 
  $\mcl{L}_{\mcl{M}}
     =
   \descsetexp{ L_f }{ f\in\mcl{F} }  $
  of languages   
  that are defined, for $f\in\mcl{F}$, as
  $ L_f = \descsetexp{w\in\Sigma^*}{ \fap{f}{w} = 1 }$.   
  By Lemma~\ref{lem:inj:reg:crea} 
  there exists an injective function $\rho \funin \Sigma^* \to \Gamma^*$
  such that $\rho[L_f]$ is relatively regular in $\rho[\Sigma^*]$, for all $f\in\mcl{F}$.
  Now it is straightforward to verify that $\rho$ is a coding between the domains 
  of the models $\mcl{M}$ and $\mocDFAsover{\Gamma}$
  that facilitates the simulation of every $f\in\mcl{F}$ 
  by a function $g \funin \Gamma^* \to \setexp{0,1}$
  that denotes the ac\-cep\-tance\discretionary{/}{}{/}non-ac\-cep\-tance behavior 
  of a deterministic finite-state automaton with input alphabet $\Gamma$.
  This shows \eqref{eq:prop:moc:inj}.
\end{proof}

\begin{proposition}\label{prop:moc:inj:TM}
  Let $\Gamma$ be an alphabet with $\setexp{0,1} \subseteq \Gamma$. 
  Then there is a coding $\rho \funin \Gamma^*\to\Gamma^*$ 
  such that 
  $\mocTMDsover{\Gamma} \lesssim_{\rho} \mocDFAsover{\Gamma}$ holds.
  But any such a coding $\rho$ cannot be computable.
\end{proposition}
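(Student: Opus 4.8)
The plan is to split the proposition into its two assertions—existence of a suitable coding, and non-computability of every such coding—and treat them separately. For the first assertion I would simply invoke Proposition~\ref{prop:moc:inj}. The model $\mocTMDsover{\Gamma}$ is a decision model with domain $\Gamma^*$, and it is countable because there are only countably many Turing machines and hence only countably many Turing-machine-decidable functions $f \funin \Gamma^* \to \setexp{0,1}$. Taking $\Sigma = \Gamma$ in Proposition~\ref{prop:moc:inj} yields $\mocTMDsover{\Gamma} \lesssim \mocDFAsover{\Gamma}$, and by Definition~\ref{def:mod:sim}(i) this means precisely that $\mocTMDsover{\Gamma} \lesssim_{\rho} \mocDFAsover{\Gamma}$ for some coding $\rho$. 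So the existence part needs no new work.

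For the second assertion I would argue by contradiction, via a diagonalisation against all DFAs. First I would record how $\rho$ must act on the answer bits: since some $f \in \mocTMDsover{\Gamma}$ takes both values $0$ and $1$, and its simulating $g$ satisfies $g \circ \rho = \rho \circ f$ with $g$ taking values in $\setexp{0,1}$, the values $\rho(0), \rho(1)$ both lie in $\setexp{0,1}$; by injectivity of $\rho$ we get $\setexp{\rho(0),\rho(1)} = \setexp{0,1}$, so $\rho$ restricts to a bijection of $\setexp{0,1}$ onto itself (this is exactly the dichotomy noted after Definition~\ref{def:coding}). Now suppose, towards a contradiction, that $\rho$ is computable. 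Fix a computable length-lexicographic enumeration $u_0,u_1,u_2,\ldots$ of $\Gamma^*$ and a computable enumeration $M_0,M_1,M_2,\ldots$ of all DFAs over $\Gamma$, and define $f \funin \Gamma^* \to \setexp{0,1}$ by setting, for each $i$,
\[
  f(u_i) = \rho^{-1}\bigl(1 - M_i(\rho(u_i))\bigr),
\]
where $\rho^{-1}$ denotes the inverse of the bijection $\rho{\restriction}\setexp{0,1}$ and $M_i(\rho(u_i)) \in \setexp{0,1}$ records whether $M_i$ accepts $\rho(u_i)$.

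The point of this construction is that $f$ is total computable—computing $\rho(u_i)$ uses the assumed computability of $\rho$, running the DFA $M_i$ on $\rho(u_i)$ is decidable, and $\rho^{-1}$ on $\setexp{0,1}$ is a two-entry table obtained by computing $\rho(0)$ and $\rho(1)$—so $f \in \mocTMDsover{\Gamma}$. Hence the simulation $\mocTMDsover{\Gamma} \lesssim_{\rho} \mocDFAsover{\Gamma}$ provides $g \in \mocDFAsover{\Gamma}$ with $g \circ \rho = \rho \circ f$, and $g$ is the acceptance function of some DFA appearing in the enumeration, say $g = M_j$. Evaluating the simulation identity at $w = u_j$ and using the definition of $f$ gives $M_j(\rho(u_j)) = \rho(f(u_j)) = 1 - M_j(\rho(u_j))$, which is impossible for a value in $\setexp{0,1}$. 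This contradiction shows that no such $\rho$ can be computable.

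I expect the main obstacle to be the careful bookkeeping that makes the diagonal function $f$ legitimately a member of $\mocTMDsover{\Gamma}$ rather than merely a partial recipe: one must check that the self-reference is sound (the $g$ delivered by the simulation genuinely coincides, as a $\setexp{0,1}$-valued function, with some $M_j$ of the fixed enumeration, and it is that very index $j$ at which the diagonal bite is taken), and that all the ingredients—evaluation of $\rho$, DFA acceptance, and inversion of $\rho$ on the answer bits—are uniformly computable under the contradiction hypothesis. The symmetric handling of the two possible behaviours of $\rho$ on $\setexp{0,1}$ is exactly what the use of $\rho^{-1}$ inside the definition of $f$ takes care of.
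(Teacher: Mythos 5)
Your proposal is correct and follows essentially the same route as the paper: the existence part is delegated to Proposition~\ref{prop:moc:inj}, and non-computability is shown by diagonalizing against a computable enumeration of DFAs, using the assumed computability of $\rho$ to make the diagonal object Turing-decidable (the paper packages the diagonal as the language $L = \{ w_n \mid \rho(w_n) \notin \lang{A_n} \}$, while you define the corresponding characteristic function directly). Your explicit verification that $\rho$ restricts to a permutation of $\{0,1\}$ merely spells out a detail the paper's terser proof leaves implicit, and is a welcome addition rather than a divergence.
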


\begin{proof}
  The main statement follows from Proposition~\ref{prop:moc:inj}.
  That $\rho$ cannot be computable can be seen as follows.
  Suppose that $\rho$ is computable.
  Let $A_0,A_1,\ldots$ and $w_0,w_1,\ldots$ be recursive enumerations 
  of all finite automata and words over~$\Sigma$.
  Then the language 
                    $L = \descsetexp{w_n}{n \in \nat,\, \rho(w_n)\notin\lang{A_n}}$
  is Turing computable, 
  but $\rho[L]$ is not regular.
\end{proof}

This statement can be strengthened to a bijective, and therefore (see the proof) also decent$^*$, 
simulation with finite-state automata
by using Lemma~\ref{lem:inj2bij} and Theorem~\ref{thm:main}.
\begin{corollary}\label{cor:moc:bij}
  Let $\Sigma$ and $\Gamma$ be alphabets, where $\setexp{0,1} \subseteq \Gamma$. 
  Then for every countable decision model $\mcl{M}$ 
  with domain $\Sigma^*$ the following two statements hold:
  \begin{enumerate}\setlength{\itemsep}{0.5ex}
      \vspace*{0.25ex}
    \item{}\label{item:i:cor:moc:bij}
       $\mcl{M} \; \lesssim_{\mrm{bijective}} \mocDFAsover{\Gamma}$,
    \item{}\label{item:ii:cor:moc:bij} 
      $\mcl{M} \; \lesssim_{\mrm{decent}^*} \mocDFAsover{\Gamma}$.
  \end{enumerate}
  That is, deterministic finite-state automata with input alphabet~$\Gamma$
  are at least as powerful as the model $\mcl{M}$, both via bijective and via decent$^*$ codings.
\end{corollary}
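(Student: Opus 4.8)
The plan is to deduce Corollary~\ref{cor:moc:bij} from Theorem~\ref{thm:main} (equivalently, from Lemmas~\ref{lem:inj:reg:crea} and~\ref{lem:inj2bij}) in much the same way that Proposition~\ref{prop:moc:inj} was deduced from Lemma~\ref{lem:inj:reg:crea}. First I would associate to the countable decision model $\mcl{M} = \pair{\Sigma^*}{\mcl{F}}$ the countable class of languages
\begin{align*}
  \mcl{L}_{\mcl{M}} = \descsetexp{L_f}{f\in\mcl{F}} \,, \qquad L_f = \descsetexp{w\in\Sigma^*}{\fap{f}{w}=1} \,,
\end{align*}
exactly as in the proof of Proposition~\ref{prop:moc:inj}. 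By Theorem~\ref{thm:main}, applied to $\mcl{L}_{\mcl{M}}$ and the alphabet $\Gamma$ (which satisfies $\card{\Gamma}\ge 2$ since $\setexp{0,1}\subseteq\Gamma$), there exists a \emph{bijective} function $\rho : \Sigma^* \to \Gamma^*$ such that $\rho[L_f]$ is a regular language for every $f\in\mcl{F}$.

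For part~\eqref{item:i:cor:moc:bij} I would verify that this $\rho$ is a bijective coding in the sense of Definition~\ref{def:coding} (extended by $\rho(\bot)=\bot$), and that it witnesses $\mcl{M}\lesssim_{\rho}\mocDFAsover{\Gamma}$. Concretely, for each $f\in\mcl{F}$ the regular language $\rho[L_f]$ is accepted by some DFA with input alphabet $\Gamma$, whose acceptance/non-acceptance behavior is a function $g\in\mcl{D}$; since $\rho$ is a bijection, $g$ simulates $f$ via $\rho$, i.e.\ $g\circ\rho = \rho\circ f$ on $\Sigma^*$, which gives $\mcl{M}\lesssim_{\mrm{bijective}}\mocDFAsover{\Gamma}$. (Here one uses that $\fap{f}{w}\in\setexp{0,1}$ and that $w\in L_f$ iff $\rho(w)\in\rho[L_f]$, so the DFA reading $\rho(w)$ reproduces the decision of $f$.)

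For part~\eqref{item:ii:cor:moc:bij} I would observe, as the parenthetical ``see the proof'' already signals, that every bijective coding is trivially decent$^*$: since $\rho$ is bijective we have $\rho[\Sigma^*]=\Gamma^*$, so taking the total function $g\in\mcl{D}$ that outputs the constant value $d=1$ (the behavior of a one-state accepting DFA) we get $\fap{g}{y}=d$ for all $y\in\Gamma^*=\rho[\Sigma^*]$, which satisfies Definition~\ref{def:decent:*} vacuously. Hence the same $\rho$ furnished by Theorem~\ref{thm:main} is a decent$^*$ coding with respect to $\mocDFAsover{\Gamma}$, yielding $\mcl{M}\lesssim_{\mrm{decent}^*}\mocDFAsover{\Gamma}$.

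I do not expect a serious obstacle here, as the corollary is essentially a translation of Theorem~\ref{thm:main} into the Boker--Dershowitz framework. The only points requiring care are bookkeeping ones: checking that $\rho$ (together with the convention $\rho(\bot)=\bot$) genuinely meets Definition~\ref{def:coding}, that the simulation diagram of Definition~\ref{def:fun:sim} commutes because $\rho$ is a bijection and $\rho[L_f]$ is accepted by a DFA, and that the degenerate constant-output automaton indeed witnesses decency$^*$ when the image is all of $\Gamma^*$. The substantive mathematical content lives entirely in Theorem~\ref{thm:main}, so the main work has already been done upstream.
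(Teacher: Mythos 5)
Your overall route is exactly the paper's: statement (i) is obtained by rerunning the proof of Proposition~\ref{prop:moc:inj} with Theorem~\ref{thm:main} in place of Lemma~\ref{lem:inj:reg:crea}, and statement (ii) follows because a bijective coding has image $\Gamma^*$, which is trivially recognizable; your constant-output one-state DFA makes the paper's remark explicit, and your handling of the $\bot$-extension required by Definition~\ref{def:coding} is fine.

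There is, however, one step in (i) that does not hold as written. Your parenthetical verification establishes $g(\rho(w)) = f(w)$: the DFA for $\rho[L_f]$ outputs $1$ iff $\rho(w) \in \rho[L_f]$ iff $f(w) = 1$. But Definition~\ref{def:fun:sim} demands $g \circ \rho = \rho \circ f$, i.e.\ $g(\rho(w)) = \rho(f(w))$, and the bijection supplied by Theorem~\ref{thm:main} gives no control over the words $\rho(0)$ and $\rho(1)$; generically these lie outside $\{0,1\}$, so for a non-constant $f$ (whose simulating $g$ has range in $\{0,1\}$) the diagram does not commute. This is precisely the point the paper flags in the discussion following Definition~\ref{def:coding}: simulating a non-constant decision function forces $\rho$ to map $\{0,1\}$ onto $\{0,1\}$, so this property must be arranged rather than assumed. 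The repair is short: post-compose $\rho$ with a permutation $\pi$ of $\Gamma^*$ of finite support such that $\pi(\rho(0)) = 0$ and $\pi(\rho(1)) = 1$ (such a $\pi$ exists since $\rho(0) \neq \rho(1)$, and any injection on a finite set extends to a finite-support permutation). The corrected map $\pi \circ \rho$ is still a bijective coding, and each $(\pi \circ \rho)[L_f]$ differs from the regular language $\rho[L_f]$ in at most finitely many words, hence remains regular; it then also simulates constant $f$ correctly, via the constant DFA with output $0$ or $1$ as appropriate. With that patch --- which the paper's own terse ``argumentation analogous to the proof of Proposition~\ref{prop:moc:inj}'' silently presupposes --- your proof is complete, and statement (ii) goes through unchanged since $\pi \circ \rho$ is still bijective.
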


\begin{proof}
  Statement~\ref{item:i:cor:moc:bij} follows by 
  an argumentation analogous to the proof of Proposition~\ref{prop:moc:inj}
  in which the use of Lemma~\ref{lem:inj:reg:crea} 
  is replaced by an appeal to our main theorem, Theorem~\ref{thm:main}. 
  
  Statement~\ref{item:ii:cor:moc:bij}
  follows directly from statement~\ref{item:i:cor:moc:bij}:
  For bijective codings $\rho : \Sigma^* \to \Gamma^*$,
  the image of $\rho$ coincides with $\Gamma^*$, which is trivially 
  recognizable by a finite-state automaton. 
\end{proof}   
      
\begin{remark}      
  As mentioned above (just before Definition~\ref{def:decent:*}), 
  decent codings in the sense of~\cite{boke:2008}
  do not form a sensible notion for decision models.
  However, for every countable decision model $\mcl{M}$
  and every alphabet $\Gamma$ with $\setexp{0,1}\subseteq\Gamma$  
  it holds that
  \begin{center}
    $
    \mcl{M} \; \lesssim_{\mrm{decent}} \mocDFAswithidover{\Gamma} 
    $
  \end{center}
  where $\mocDFAswithidover{\Gamma}$
  is the extension of $\mocDFAswithidover{\Gamma}$ by adding the identity function $\mrm{id} \funin \Gamma^* \to \Gamma^*$.
 
  Sequential finite-state transducers $\mocFSTsover{\Gamma}$ (see e.g.~\cite{saka:2009}) over alphabet $\Gamma$
  form a natural computational model that extends  $\mocDFAswithidover{\Gamma}$. 
  Thus for an alphabet $\Gamma$ with $\setexp{0,1}\subseteq\Gamma$ we also obtain the very counterintuitive result: 
  \begin{center}
    $\mcl{M} \; \lesssim_{\mrm{decent}} \mocFSTsover{\Gamma} \,,$
  \end{center}
  that is,  
  sequential finite-state transducers are at least as strong via a decent coding as
  every countable decision model.
  (Here we consider finite-state transducers that are able to recognize the end of a word.)
  In particular, every Turing-machine decider 
  can be simulated by a sequential finite state transducer via a decent coding.
\end{remark} 

\begin{remark}\label{rem:gen:models}
  These results raise the question, whether these anomalies only concern decision models. 
  In particular, 
  one may wonder
  whether the comparison of \emph{computational models} 
  avoids counterintuitive results 
  when additional requirements are imposed on the models that are compared.
  A candidate requirement would be to enforce that the output of the models must have an infinite range.
  Or, even stronger, we could require the following property: 
%
    A class of models $\mcl{M} = \pair{D}{\mcl{F}}$ is 
    \emph{image-complete} if for every non-empty computable 
    set $I \subseteq D$ there exists $f \in \mcl{F}$ such that $f[D] = I$.
  
  Let $\mcl{T} = \pair{\Sigma^*}{\mcl{F}}$ consist of all Turing machines
  such that for every $f \in \mcl{F}$ we have
  there exists a finite set $L_f \subseteq \Sigma^*$ and for all $x \in \Sigma^*$ we have
  $f(x) \in \{x\} \cup L_f \cup \{\bot\}$\,.
  Thus functions $f \in \mcl{T}$ map words either to themselves or into a finite set
  that may depend on $f$.
  The class $\mcl{T}$ is a natural model because
  it can be implemented by a recursively enumerable set of Turing machines\footnote{%
    The idea is to enumerate all Turing machines and finite sets $L_f$, and adapt
    the machines to check on termination whether the output is in $L_f$
    and otherwise make sure that the output is equal to the input.
    In this way, we obtain all machines that are necessary to implement $\mcl{T}$.}.
  Note that $\mcl{T}$ is a strict extension of Turing-machine deciders
  and it is an image-complete model.
  This can be seen as follows:
  Let $I \subseteq \Sigma^*$ be any non-empty computable set.
  Let $i \in I$ and define the function $f : \Sigma^* \to \Sigma^*$  for all $x \in \Sigma^*$ by:
  $f(x) = x$ if $x \in I$ and $f(x) = i$, otherwise.
  Then $f[\Sigma^*] = I$ and $f \in \mcl{T}$.
  
  Then the model $\mcl{T}$ 
  can be simulated by two-way sequential, finite-state transducers~\cite{enge:hoog:2001}
  via decent codings:
  \begin{center}
    $\mcl{T} \; \lesssim_{\mrm{decent}} \moctwowayFSTsover{\Gamma}\,.$
  \end{center}
  We have already argued that the Turing-machine deciders can be simulated
  by $\mocFSTsover{\Gamma}$.
  This can easily be generalized to a finite number of output words $L$
  (finite-state transducers can output words instead of only symbols).
  Now we assume that one symbol $w \in L$ symbolizes the identity output;
  then the two-way finite-state transducer, instead of producing this output word,
  can walk back to the beginning of the input and reproduce the input word as output word.
\end{remark}

Our results suggest that there are definite limitations 
to the concepts of power comparison for models of computation by Boker and Dershowitz.
These concepts have an `absolute' flavor insofar as 
they do not formulate any explicit constraints on the computability of encodings used for simulations.
The counterintuitive consequences 
                                  pertain primarily to decision models
(yet this is a blurry concept, see Remark~\ref{rem:gen:models}), 
and do not extend to models that include all partial-recursive functions (see Corollary~\ref{cor:thm:shapiro:extended} below). 
Yet they demonstrate that these comparison concepts
lack the desired robustness. 

We note that our results are not the first indications of anomalies.
In~\cite[Example~5.1]{boke:ders:2006}
Boker and Dershowitz show that Turing-machine deciders are not a complete model of computation,
in the sense that this model can be strictly extended to incorporate a non-recursive set.
Our results strengthen this example naturally in the following three ways:
(i)~to bijective, and decent encodings, (ii)~to use finite automata instead of Turing-machine deciders,
and (iii)~to arbitrary countable models as extensions.
This is because, as we have shown, finite automata can be extended, via decent codings, 
to any countable decision model, and consequently the same holds for Turing-machine deciders.
Hence, even decent encodings facilitate the simulation of all Turing-machine deciders 
by finite automata, more precisely, by finite-state transducers.

The following theorem is an easy consequence of the concepts developed by Shapiro in~\cite{shap:1982}.
He calls a number representation $r : \nat \to \Sigma^*$ \emph{acceptable}
if it is bijective, and if the successor function $\msf{succ} \funin \nat \to \nat$ 
can be simulated by a Turing machine on the representations,
that is, if the lifting $\msf{succ}^r : \Sigma^* \to \Sigma^*$ of   $\msf{succ}$
with the property:
\begin{align*}
  \msf{succ}^r(r(n)) = r(\msf{succ}(n))\,, && \text{for all $n\in\nat$,}
\end{align*}
is Turing computable.

\begin{theorem}\label{thm:shapiro:extended}
  A bijective encoding $f : \Sigma^* \to \Gamma^*$
  is computable
  if and only if
  there is an acceptable number representation $r$
  such that $f \circ r$ is acceptable as well.
\end{theorem}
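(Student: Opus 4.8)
The plan is to reduce both directions to a single characterization: a bijection between $\nat$ (in its standard representation) and a word monoid is acceptable precisely when it is computable together with its inverse. Once this characterization is in place, the theorem becomes pure bookkeeping with compositions of computable functions, which is why it is "an easy consequence" of Shapiro's framework. Throughout I would fix a standard computable bijection — the length-lexicographic enumeration of $\Sigma^*$ — and freely use that $\msf{succ}$ on $\nat$ is computable and that the computable functions are closed under composition.

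First I would prove the key lemma: for a bijection $r : \nat \to \Sigma^*$, $r$ is acceptable if and only if both $r$ and $r^{-1}$ are computable. The direction from bidirectional computability to acceptability is immediate, since $\msf{succ}^r = r \circ \msf{succ} \circ r^{-1}$ is then a composition of computable functions. For the converse, the idea is that acceptability lets us regenerate the whole enumeration: from $r(0)$ (a fixed word, hard-wired as a constant) we obtain $r(n)$ by iterating the computable map $\msf{succ}^r$ exactly $n$ times, so $r$ is computable; and $r^{-1}$ is computable by searching through $r(0), r(1), r(2), \ldots$ for the given word, a search that terminates because $r$ is surjective. I would also record separately the analogous search argument showing that any computable bijection $f : \Sigma^* \to \Gamma^*$ has a computable inverse $f^{-1}$ (enumerate the $x \in \Sigma^*$, compute $f(x)$, and halt on a match, using surjectivity of $f$ for termination).

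For the direction ($\Rightarrow$), assume $f$ is computable and take $r$ to be the standard enumeration, which is computable with computable inverse and hence acceptable by the lemma. Since $f$ is a computable bijection, $f^{-1}$ is computable by the search argument, so $f \circ r$ is computable and $(f \circ r)^{-1} = r^{-1} \circ f^{-1}$ is computable as well; by the lemma $f \circ r$ is acceptable, as required. For the direction ($\Leftarrow$), assume $r$ and $f \circ r$ are both acceptable. By the lemma $r^{-1}$ is computable and $f \circ r$ is computable, whence $f = (f \circ r) \circ r^{-1}$ is computable as a composition.

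The only genuinely nontrivial step — and the conceptual heart of the argument — is the converse half of the lemma, namely extracting computability of $r$ and of $r^{-1}$ from mere computability of the successor lift $\msf{succ}^r$; everything else reduces to closure under composition. I expect the remaining difficulty to be purely expository: one must be careful that \emph{computable} is always taken relative to a fixed standard representation of $\nat$, and one should note that hard-wiring the constant $r(0)$ is legitimate because we are constructing (the inverse of) one fixed function rather than a uniform procedure over all representations.
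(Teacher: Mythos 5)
Your proposal is correct and takes exactly the route the paper intends: the paper states Theorem~\ref{thm:shapiro:extended} without proof, calling it ``an easy consequence of the concepts developed by Shapiro'', and your key lemma --- a bijection $r$ is acceptable if and only if both $r$ and $r^{-1}$ are computable, proved by iterating $\msf{succ}^r$ from the hard-wired constant $r(0)$ and by exhaustive search using surjectivity --- is precisely that consequence spelled out. The two composition arguments ($f = (f \circ r) \circ r^{-1}$ for one direction, and taking $r$ to be the length-lexicographic enumeration with $(f \circ r)^{-1} = r^{-1} \circ f^{-1}$ for the other) are sound and complete the proof as the paper's citation of~\cite{shap:1982} presupposes.
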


This theorem implies that Corollary~\ref{cor:moc:bij}~\ref{item:i:cor:moc:bij}
does not generalize to models that compute the partial-recursive functions.

\begin{corollary}\label{cor:thm:shapiro:extended}
  There is no bijective encoding $\rho \funin \Sigma^*\to\Sigma^*$
  such that $\mocTMsover{\Sigma} \lesssim_{\rho} \moctwowayFSTsover{\Sigma}$ holds,
  where $\mocTMsover{\Sigma}$ is the model of Turing machines over alphabet $\Sigma$.
\end{corollary}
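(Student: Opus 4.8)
The plan is to argue by contradiction. I would assume that a bijective encoding $\rho \funin \Sigma^* \to \Sigma^*$ exists with $\mocTMsover{\Sigma} \lesssim_{\rho} \moctwowayFSTsover{\Sigma}$, first deduce from Shapiro's characterization that $\rho$ is forced to be computable, and then obtain an absurdity from the fact that a computable bijection would make two-way finite-state transducers as powerful as all Turing machines.

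For the first part I would invoke Theorem~\ref{thm:shapiro:extended}. Fix any acceptable number representation $r \funin \nat \to \Sigma^*$ (the length-lexicographic enumeration, whose successor amounts to incrementing in bijective base $\card{\Sigma}$, is one such). Its successor-lift $\msf{succ}^r$ is a total, Turing-computable function, hence lies in $\mocTMsover{\Sigma}$, so by the assumed simulation there is a function $g$ realized by a two-way finite-state transducer with $g \circ \rho = \rho \circ \msf{succ}^r$. Setting $r' \defd \rho \circ r$, a short computation gives $g(r'(n)) = \rho(\msf{succ}^r(r(n))) = \rho(r(\msf{succ}(n))) = r'(\msf{succ}(n))$ for all $n \in \nat$, so $g$ is exactly the successor-lift of the bijection $r'$. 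Since $g$ is computed by a two-way transducer it is in particular Turing computable, whence $r' = \rho \circ r$ is acceptable. Thus there is an acceptable $r$ with $\rho \circ r$ acceptable, and Theorem~\ref{thm:shapiro:extended} yields that $\rho$ is computable.

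For the second part I would use that a computable bijection on $\Sigma^*$ has a computable inverse (search for the unique preimage, which exists by surjectivity and is unique by injectivity). Then for an arbitrary Turing-computable $h \funin \Sigma^* \to \Sigma^*$ the conjugate $f \defd \rho^{-1} \circ h \circ \rho$ is again Turing computable and so lies in $\mocTMsover{\Sigma}$; its simulant is a transducer function $g$ with $g \circ \rho = \rho \circ f$, whence $g = \rho \circ f \circ \rho^{-1} = h$ by bijectivity of $\rho$. Hence every Turing-computable function would be realized by a two-way finite-state transducer. This is impossible: a deterministic two-way transducer visits at most $O(\card{Q}\cdot n)$ distinct configurations on an input of length $n$, so each halting run has linearly bounded length and produces output of length $O(n)$; thus any total function with super-linear output length is Turing computable but not transducer computable. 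This contradiction completes the proof.

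The main obstacle is the first part: the insight that simulating merely the successor-lift functions already pins down $\rho$ up to computability. The crucial move is to recognize $\rho \circ r$ as a new representation whose successor-lift is precisely the transducer simulating $\msf{succ}^r$, so that Shapiro's equivalence (Theorem~\ref{thm:shapiro:extended}) can be brought to bear. The second part is then the routine observation that two-way transducers are strictly weaker than Turing machines, which is what prevents the phenomenon of Corollary~\ref{cor:moc:bij}~\ref{item:i:cor:moc:bij} from surviving the passage to models computing all partial-recursive functions.
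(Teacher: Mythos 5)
Your proof is correct and takes exactly the route the paper intends: the corollary is stated without an explicit proof, merely flagged as a consequence of Theorem~\ref{thm:shapiro:extended}, and your argument fills in precisely that plan --- using the simulated successor-lift to show $\rho \circ r$ is acceptable for an acceptable $r$, concluding via the theorem that $\rho$ is computable, and then refuting computability of $\rho$ by conjugation together with the linear output bound of deterministic two-way transducers. Both halves are sound as written, so there is nothing to add.
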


While certainly more investigation is needed,
we also interpret our results as follows.
For comparing the computational power of models of computation over different domains,
it is crucial to make clear how computational power should be measured for the purpose at hand.
After having settled on a reasonable measure,
this measure can then be used to constrain the computational power of admissible codings
that may act as a trustworthy intermediary between the models.   

\section{Consequences for Generalized Automaticity}\label{sec:auto}
Finite-state automata can be used to generate infinite sequences, see~\cite{allo:shal:2003}.
This is usually done using the standard base-$k$ representation $(n)_k \in \{0,1,\ldots,k-1\}^*$ of the natural numbers $n \in \nat$.
A sequence $\sigma\in\str{\Delta}$ 
is called \emph{$k$-automatic} if, 
for some deterministic finite-state automaton $A$ with output (DFAO, see Section~\ref{sec:prelims})
we have $\sigma(n) = A((n)_k)$ for all $n \in \nat$.

This concept has been generalized in several ways,
where different number representations are fed to the automaton,
see, e.g., \cite{shal:1989,rigo:2000,endr:grab:hend:2013}.
This motivates the study of automaticity with respect to arbitrary number representations,
which is part of work in progress of the present authors with Clemens Kupke, 
Larry Moss, and Jan Rutten.

\begin{definition}\label{def:auto}
  Let $c : \nat \to \Gamma^*$ be a number representation.
  A sequence~$\sigma \in \str{\Delta}$ over an alphabet $\Delta$
  is \emph{$c$-automatic} if there exists a DFAO~$A$
  such that $\sigma(n) = A(c(n))$ for all $n \in \nat$.
\end{definition}

\begin{lemma}\label{lem:fiber}
  Let $c : \nat \to \Gamma^*$ be a number representation.
  A sequence~$\sigma \in \str{\Delta}$ 
  is $c$-automatic if and only if
  for every $a \in \Delta$ the `fiber' 
  $\{ c(n) \where \sigma(n) = a \}$
  is relatively regular in $c[\nat]$.
\end{lemma}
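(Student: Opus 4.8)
The plan is to prove both directions of the biconditional, leveraging the definition of $c$\nobreakdash-automaticity via a DFAO and the characterization of DFAOs in terms of finitely many DFAs (one per output letter).

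\medskip

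\noindent\textbf{Forward direction.} Suppose $\sigma$ is $c$\nobreakdash-automatic, witnessed by a DFAO $A = \tup{Q,\Gamma,\delta,q_0,\Delta,\lambda}$ with $\sigma(n) = A(c(n))$ for all $n \in \nat$. Fix $a \in \Delta$. I would consider the state set $F_a = \{ q \in Q \where \lambda(q) = a \}$ and form the ordinary DFA $A_a = \tup{Q,\Gamma,\delta,q_0,F_a}$, which accepts exactly those words $w$ with $A(w) = a$. Setting $R_a = \lang{A_a}$, a regular language, I would then verify
\begin{align*}
  \{ c(n) \where \sigma(n) = a \}
    = \{ c(n) \where A(c(n)) = a \}
    = R_a \cap c[\nat] \,,
\end{align*}
where the last equality uses that $c(n) \in R_a$ iff $A(c(n)) = a$, and that the fiber is by definition a subset of $c[\nat]$. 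Thus the fiber equals $R_a \cap c[\nat]$, so it is relatively regular in $c[\nat]$ per Definition~\ref{def:rel:reg}.

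\medskip

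\noindent\textbf{Backward direction.} Conversely, assume that for every $a \in \Delta$ the fiber is relatively regular in $c[\nat]$, so there is a regular language $R_a$ with $\{ c(n) \where \sigma(n) = a \} = R_a \cap c[\nat]$. Since $\Delta$ is finite, I would pick a DFA $B_a$ recognizing each $R_a$ and build a product automaton over $\Gamma$ whose states track the joint run of all the $B_a$ simultaneously; this product is still a finite automaton. The task is then to equip this product with an output function $\lambda$ that, on input $c(n)$, returns the unique $a$ for which $c(n) \in R_a$. The natural definition is: from the joint state reached, read off which $R_a$ the input lies in and output that $a$. I would then check $\sigma(n) = A(c(n))$ for all $n$: since the fibers partition $\nat$ (as $\sigma$ is a total function into $\Delta$), each $c(n)$ lies in exactly one $R_a \cap c[\nat]$, namely the one for $a = \sigma(n)$, so the output is correct.

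\medskip

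\noindent\textbf{Main obstacle.} The subtle point lies in the backward direction: on inputs \emph{outside} $c[\nat]$, a word may belong to several of the $R_a$ at once, or to none, so the fibers being relatively regular gives a clean partition of $c[\nat]$ but \emph{not} of all of $\Gamma^*$. Since a DFAO's output function must be total on states and hence assign some output to every word, I must ensure the chosen $\lambda$ is well-defined as a function on states while only being required to be \emph{correct} on $c[\nat]$. The remedy is that $c$\nobreakdash-automaticity only constrains $A$ on the images $c(n)$; for words outside $c[\nat]$ any output is permissible. Concretely, I would fix a linear priority order on $\Delta$ and, for a product state in which several $R_a$ are accepted, output the highest-priority such $a$ (and an arbitrary fixed default if none is accepted). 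On inputs of the form $c(n)$ the membership is unique among the fibers, so this tie-breaking never alters the correct value $\sigma(n)$, and the construction yields a genuine DFAO witnessing $c$\nobreakdash-automaticity.
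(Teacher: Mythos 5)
Your proposal is correct and takes essentially the approach the paper intends: the paper's proof is only a pointer to Lemma~5.2.6 of Allouche and Shallit~\cite{allo:shal:2003}, whose standard argument is exactly your translation of a DFAO into one DFA per output letter (forward direction) and the product-automaton construction (backward direction). Your explicit treatment of words outside $c[\nat]$ --- fixing a priority order on $\Delta$ to make the output function total, which is harmless on $c[\nat]$ since injectivity of $c$ makes membership there unique among the fibers --- is precisely the adaptation needed to pass from plain regularity to relative regularity in $c[\nat]$, so nothing is missing.
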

\begin{proof}
  Along the lines of Lemma~5.2.6 in~\cite{allo:shal:2003}.
\end{proof}

\begin{corollary}\label{cor:inj2bij:auto}
  For every injective function $c : \nat \to \Gamma^*$ 
  there is a bijective function $d : \nat \to \Gamma^*$
  such that for every  $\sigma \in \str{\Delta}$ we have:
  if $\sigma$ is $c$-automatic, then $\sigma$ is also $d$-automatic.
\end{corollary}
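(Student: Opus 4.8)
The plan is to obtain $d$ directly by invoking Lemma~\ref{lem:inj2bij} with $A = \nat$ (which is countably infinite) and $f = c$. This yields a bijection $d : \nat \to \Gamma^*$ with the property that $d[S]$ is a regular language whenever $c[S]$ is relatively regular in the image $c[\nat]$, for every $S \subseteq \nat$. I would then use Lemma~\ref{lem:fiber} twice --- once in each direction --- as a bridge between automaticity and relative regularity of fibers: the forward direction to extract relative regularity from the $c$-automaticity of $\sigma$, and the backward direction (now with $d$ in place of $c$) to recover $d$-automaticity.

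Concretely, suppose $\sigma \in \str{\Delta}$ is $c$-automatic. For each letter $a \in \Delta$ I would set $S_a = \{ n \in \nat \where \sigma(n) = a \}$, so that the fiber appearing in Lemma~\ref{lem:fiber} is exactly the image $c[S_a] = \{ c(n) \where \sigma(n) = a \}$. The forward direction of Lemma~\ref{lem:fiber} then tells us that each $c[S_a]$ is relatively regular in $c[\nat]$; since $\Delta$ is a finite alphabet, there are only finitely many such fibers to handle.

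By the defining property of $d$ supplied by Lemma~\ref{lem:inj2bij}, each $d[S_a]$ is consequently a regular language. Because $d$ is bijective we have $d[\nat] = \Gamma^*$, and any regular $R$ satisfies $R = R \cap \Gamma^* = R \cap d[\nat]$, so $d[S_a] = \{ d(n) \where \sigma(n) = a \}$ is (trivially) relatively regular in $d[\nat]$. Applying the backward direction of Lemma~\ref{lem:fiber} with $d$ substituted for $c$ then delivers that $\sigma$ is $d$-automatic, which is the claim.

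I expect no genuine obstacle here: the corollary is essentially the composition of the two preceding lemmas, and the only things that must be verified are the bookkeeping that the fibers of $\sigma$ with respect to $c$ and with respect to $d$ are the images $c[S_a]$ and $d[S_a]$ of one and the same index sets $S_a$, the finiteness of $\Delta$ (ensuring finitely many fibers), and the observation that for a \emph{bijective} $d$ regularity and relative regularity in $d[\nat]$ coincide. Any remaining delicacy is notational rather than mathematical.
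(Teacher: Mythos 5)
Your proposal is correct and follows essentially the same route as the paper's own proof: obtain $d$ from Lemma~\ref{lem:inj2bij} applied to $c$, use Lemma~\ref{lem:fiber} to translate $c$\nb-automaticity into relative regularity of the fibers $c[S_a]$ in $c[\nat]$, transfer this to regularity of $d[S_a]$ via the defining property of $d$, and apply Lemma~\ref{lem:fiber} in the converse direction using $d[\nat] = \Gamma^*$. The bookkeeping you flag (same index sets $S_a$, finiteness of $\Delta$, regularity coinciding with relative regularity for bijective $d$) matches the paper exactly.
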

\begin{proof}
  Let $c : \nat \to \Gamma^*$ be an injection.
  Let $d : \nat \to \Gamma^*$ be the bijective function obtained from $c$ by Lemma~\ref{lem:inj2bij}.
  Let $\sigma \in \str{\Delta}$ be $c$-automatic.
  We show that $\sigma$ is $d$-automatic by an application of Lemma~\ref{lem:fiber}.
  Let $a \in \Delta$. 
  Then the fiber
  $\{ c(n) \where \sigma(n) = a \}$
  is relatively regular in $c[\nat]$ by Lemma~\ref{lem:fiber}.
  By Lemma~\ref{lem:inj2bij}
  we obtain that the fiber $\{ d(n) \where \sigma(n) = a \}$ is regular
  (and consequently relatively regular in $d[\nat] = \Gamma^*$).
  Hence, by Lemma~\ref{lem:fiber}, $\sigma$ is $d$-automatic.
\end{proof}

The following proposition shows that the implication in Corollary~\ref{cor:inj2bij:auto} 
cannot be strengthened to equivalence of $c$- and $d$-automaticity:
not for all injective $c$ there is a bijective $d$ 
so that $c$-automaticity and $d$-automaticity coincide.

\begin{proposition}\label{prop:counterexample:iff:thm:inj:bij}
  Define the representation $c : \nat \to \{0,1\}^*$ by $c(n) = 0^{n!}$.
  Then we have:
  \begin{enumerate}
    \item 
        For every sequence $\sigma \in \Delta^\nat$,
        $\sigma$ is $c$-automatic if and only if $\sigma$ is ultimately constant.
    \item 
        For every bijection $d : \nat \to \{0,1\}^*$,
        there is a $d$\nb-auto\-ma\-tic sequence that is not ultimately constant.
  \end{enumerate}
\end{proposition}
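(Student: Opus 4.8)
The plan is to treat the two items separately, since each rests on a different elementary observation about the very sparse image $c[\nat] = \{\,0^{n!} \mid n\in\nat\,\}$. For item~(i) I would first prove the substantive direction: if $\sigma$ is $c$\nb-automatic then it is ultimately constant. Fix a DFAO $A = \tup{Q,\{0,1\},\delta,q_0,\Delta,\lambda}$ with $\sigma(n) = A(0^{n!})$ for all $n$, and consider the states $r_k = \delta(q_0,0^k)$ reached on the unary inputs $0^k$. Since $Q$ is finite, the sequence $(r_k)_{k\in\nat}$ is eventually periodic: there are a pre\nb-period $p\ge 0$ and a period $\ell\ge 1$ with $r_k = r_{k'}$ whenever $k,k'\ge p$ and $k\equiv k' \pmod{\ell}$. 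The key arithmetic remark is that $\ell \mid n!$ for every $n\ge\ell$, because $\ell$ occurs among the factors $1,\dots,n$; hence $n!\equiv 0 \pmod{\ell}$ for all large $n$, and since also $n!\to\infty$ we have $n!\ge p$ for all large $n$. Thus there is an $N$ such that for all $n\ge N$ the lengths $n!$ are $\ge p$ and mutually congruent modulo $\ell$, which pins $r_{n!}$ to one fixed state, so $\sigma(n) = \lambda(r_{n!})$ is constant for $n\ge N$. Alternatively one could route this through Lemma~\ref{lem:fiber}, noting that a fiber relatively regular in $c[\nat]$ corresponds to an ultimately periodic set of factorial lengths, all of which eventually fall into the same residue class modulo the period.

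For the converse direction of item~(i), an ultimately constant sequence is realized by a DFAO that simply counts the length of its $0$\nb-input up to a fixed threshold. If $\sigma(n) = a$ for all $n\ge N$, take a DFAO whose states count input length from $0$ up to $(N-1)!$ together with one absorbing overflow state, outputting the prescribed value at each exact factorial length $m!$ with $m<N$ and outputting $a$ once the length exceeds $(N-1)!$; its behaviour on non\nb-factorial lengths is irrelevant, so $A(0^{n!}) = \sigma(n)$. The one subtlety worth flagging is the collision $c(0)=c(1)=0$ (since $0!=1!=1$): it forces $\sigma(0)=\sigma(1)$ for every $c$\nb-automatic $\sigma$, so strictly the converse yields $c$\nb-automaticity for the ultimately constant $\sigma$ with $\sigma(0)=\sigma(1)$; I would record this compatibility condition explicitly and check that the counting DFAO meets it.

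Item~(ii) is the cleanest part. For an arbitrary bijection $d:\nat\to\{0,1\}^*$ I would exhibit the DFAO $A$ that outputs the first letter of its input (with $A(\emptyword)=0$, say), and set $\sigma(n) = A(d(n))$, which is $d$\nb-automatic by construction. The regular languages $0\{0,1\}^*$ and $1\{0,1\}^*$ are both infinite, and since $d$ is a bijection the index sets $\{\,n \mid d(n)\text{ starts with }0\,\}$ and $\{\,n\mid d(n)\text{ starts with }1\,\}$ are both infinite; hence $\sigma$ takes each of the values $0$ and $1$ infinitely often and is therefore not ultimately constant.

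I expect the main obstacle to be the forward direction of item~(i): making precise that eventual periodicity of the DFAO on unary inputs, combined with the divisibility $\ell\mid n!$, collapses $r_{n!}$ to a single state for all large $n$. Everything else is routine — item~(ii) needs no real machinery, the only point being that a bijection cannot keep either first\nb-letter class from occurring infinitely often.
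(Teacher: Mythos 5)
Your proof is correct and follows essentially the same route as the paper's: in (i) the forward direction uses eventual periodicity of the states $\delta(q_0,0^k)$ together with the divisibility $\ell \mid n!$ for all large $n$ (your version, fixing the period $\ell$ first and then using $n \ge \ell$, is in fact slightly cleaner than the paper's choice of $m_0$ with $m_0! \ge n_0 + \ell$, which does not by itself guarantee $\ell \mid m_0!$), the converse uses a length-counting DFAO with an absorbing overflow state, and (ii) uses a simple DFAO whose two output fibers are infinite regular languages that any bijection $d$ must hit infinitely often (you use the first letter, the paper uses length parity; both work). Your flag of the collision $c(0)=c(1)=0$ is moreover a genuine catch that the paper overlooks: since $0!=1!$, the map $c$ is not injective, the paper's output assignment $\lambda(q_i)=\sigma(j)$ for $i=j!$ is ill-defined at $i=1$ when $\sigma(0)\neq\sigma(1)$, and such ultimately constant $\sigma$ are indeed not $c$-automatic, so the equivalence in (i) tacitly requires the compatibility condition $\sigma(0)=\sigma(1)$ that you record explicitly.
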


\begin{proof}
  We first prove the two implications of~(i).
  \begin{itemize}
    \item[($\Rightarrow$)]
      Let $\sigma \in \Delta^\nat$ be $c$-automatic.
      That is, for some automaton $A$, 
      $\sigma(n) = A(c(n))$ for all $n\in\nat$. 
      As there are finitely many states in $A$, 
      there exists $n_0,\ell \in \nat$ such that 
      $\ell > 0$ and $\delta(q_0,0^{n}) = \delta(q_0,0^{n+\ell})$ for all $n \ge n_0$,
      where $\delta$ is the transition function of $A$ and $q_0$ its starting state.
      Let $m_0 \in \nat$ be the smallest integer such that $m_0! \ge n_0 + \ell$.
      Then we have
      $\delta(q_0,0^{m!}) = \delta(q_0,0^{m_0!})$ for all $m \ge m_0$.
      The reason is that $m_0! \ge n_0$ and, for all $m \ge m_0$, 
      $m!$ is a multiple of $\ell$ (so that $m! = m_0! + k\ell$ for some $k\in\nat$).
    \item[($\Leftarrow$)]
      Let $\sigma \in \Delta^\nat$ be ultimately constant,
      that is, there exists $n_0 \in \nat$ such that 
      $\sigma(n) = \sigma(n_0)$ for all $n \ge n_0$.
      Let $m = n_0!$\,. 
      We define an automaton $A$ with states $q_0,q_1,\ldots,q_m$,
      and transition function $\delta$ defined by
      $\delta(q_i,0) = q_{i+1}$ for all $i \in \{0,1,\ldots,m-1\}$
      and $\delta(q_m,0) = q_m$.
      For the output of $q_i$ ($0 \le i \le m$) 
      we take $\sigma(j)$ if $i = j!$\,. 
      The output of the other states is irrelevant.
      Clearly we now have $\sigma(n) = A(c(n))$ for all $n \in \nat$, 
      and so $\sigma$ is $c$\nb-automatic. 
  \end{itemize}

  For (ii), let $d : \nat \to \{0\}^*$ be a bijective encoding.
  Let $A$ be an automaton such that $A(0^{2n}) = 0$ and 
  $A(0^{2n+1}) = 1$ for all $n \in \nat$,
  and define $\sigma \in \{0,1\}^\nat$ by $\sigma(n) = A(d(n))$ for $n \in \nat$.
  Note that $\sigma$ is $d$-automatic by definition.
  As $d$ is bijective, 
  there are infinitely many $m \in \nat$ such that $d(m)$ is of the form $0^{2n}$,
  and there are infinitely many $m \in \nat$ such that $d(m)$ is of the form $0^{2n+1}$.
  Hence $\sigma$ is not ultimately constant.
\end{proof}

The following corollaries are reformulations of our main result, Theorem~\ref{thm:main},
for recognizability and, more generally, automaticity, respectively.

\begin{corollary}\label{cor:main:recog}
    For every countable class~$\mcl{C}$ of sets of natural numbers 
    there is a bijective function $r : \nat \to \Sigma^*$
    such that every $S \in \mcl{C}$ 
    is \rrecognizable{} (i.e., there is a finite automaton deciding membership $n \in S$ on the input of $r(n)$).
\end{corollary}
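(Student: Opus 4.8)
The plan is to deduce Corollary~\ref{cor:main:recog} from our Main Theorem, Theorem~\ref{thm:main}, by transporting the class $\mcl{C}$ to a class of languages over a word domain and pulling the resulting bijection back along a fixed coding of $\nat$. The first ingredient is the translation between recognizability and regularity of images, which is where bijectivity is used. Fix a target alphabet $\Sigma$ with $\card{\Sigma} \ge 2$ (inherited from the hypothesis of Theorem~\ref{thm:main}). For a bijective $r : \nat \to \Sigma^*$ and a set $S \subseteq \nat$, I claim that $S$ is $r$-recognizable if and only if $r[S]$ is a regular language: a DFA $A$ decides membership $n \in S$ on input $r(n)$ precisely when $A(r(n)) = 1 \iff n \in S$ for all $n$, and since $r$ is surjective every word equals $r(n)$ for a unique $n$, while injectivity gives $r(n) \in r[S] \iff n \in S$, so this condition is equivalent to $\lang{A} = r[S]$; conversely any DFA accepting the regular language $r[S]$ witnesses $r$-recognizability of $S$. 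Hence it suffices to produce a bijection $r : \nat \to \Sigma^*$ with $r[S]$ regular for every $S \in \mcl{C}$.

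Next I would move to a free word monoid. Fix the one-letter alphabet $\Delta = \{a\}$ and the bijection $e : \nat \to \Delta^*$ given by $e(n) = a^n$, and set $\mcl{L} = \{\, e[S] \mid S \in \mcl{C} \,\}$, a countable class of languages over $\Delta$. Applying Theorem~\ref{thm:main} with source alphabet $\Delta$ and target alphabet $\Sigma$ yields a bijection $g : \Delta^* \to \Sigma^*$ such that $g[e[S]]$ is regular for every $S \in \mcl{C}$. I then set $r = g \circ e : \nat \to \Sigma^*$. Being a composition of bijections, $r$ is a bijection, and $r[S] = g[e[S]]$ is regular for each $S \in \mcl{C}$; by the first step this means every $S \in \mcl{C}$ is $r$-recognizable, as required.

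I do not anticipate a genuine obstacle, since all the real work is packaged in Lemmas~\ref{lem:inj:reg:crea} and~\ref{lem:inj2bij} and their combination in Theorem~\ref{thm:main}. The only step requiring care is the translation above, where surjectivity of $r$ is essential to pass from ``a DFA correctly classifies every codeword $r(n)$'' to ``$\lang{A}$ equals $r[S]$.'' An equivalent route would bypass the coding $e$ and instead re-run the constructions of Lemmas~\ref{lem:inj:reg:crea} and~\ref{lem:inj2bij} directly over the countably infinite domain $\nat$---their proofs use only countability of the domain---but invoking Theorem~\ref{thm:main} keeps the argument shortest.
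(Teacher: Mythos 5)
Your proof is correct and matches what the paper intends: Corollary~\ref{cor:main:recog} is stated there as a direct reformulation of Theorem~\ref{thm:main}, and your unpacking---transporting $\mcl{C}$ along the unary coding $n \mapsto a^n$, invoking Theorem~\ref{thm:main}, and observing that for a bijective $r$ the $r$\nb-recognizability of $S$ is equivalent to regularity of $r[S]$---is exactly the implicit argument. Your closing remark is also accurate, since Lemma~\ref{lem:inj2bij} is already stated for an arbitrary countably infinite domain $A$, so the detour through $\Delta^*$ could indeed be avoided.
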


\begin{corollary}\label{cor:main:auto}
  For every countable class $\mcl{S} \subseteq \str{\Delta}$ of infinite sequences over a finite alphabet~$\Delta$,
  there exists a bijective number representation~$d : \nat \to \Gamma^*$
  such that every $\sigma \in \mcl{S}$ is $d$-automatic.
\end{corollary}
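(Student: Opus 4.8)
The plan is to reduce the statement to our main theorem by isolating the \emph{fibers} of the sequences in $\mcl{S}$. By Lemma~\ref{lem:fiber}, for a bijective representation $d : \nat \to \Gamma^*$ and a sequence $\sigma \in \str{\Delta}$, $d$-automaticity of $\sigma$ is equivalent to every fiber $\{\, d(n) \where \sigma(n) = a \,\}$ being relatively regular in $d[\nat]$; and since $d$ is bijective we have $d[\nat] = \Gamma^*$, so ``relatively regular in $d[\nat]$'' simply means ``regular''. Thus it suffices to produce a single bijective $d$ under which all fibers of all $\sigma \in \mcl{S}$ become regular languages.

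First I would collect the fibers as a countable family of subsets of the domain. For each $\sigma \in \mcl{S}$ and each letter $a \in \Delta$, set $L_{\sigma,a} = \{\, n \in \nat \where \sigma(n) = a \,\} \subseteq \nat$. Since $\mcl{S}$ is countable and $\Delta$ is finite, the class $\mcl{L} = \{\, L_{\sigma,a} \where \sigma \in \mcl{S},\, a \in \Delta \,\}$ is a countable class of subsets of $\nat$. Identifying $\nat$ with $\Sigma^*$ for an arbitrary alphabet $\Sigma$ (via a fixed bijection; note that the constructions underlying Theorem~\ref{thm:main}, in particular Lemma~\ref{lem:inj2bij}, are already formulated for an arbitrary countably infinite domain, and the proof of Lemma~\ref{lem:inj:reg:crea} uses only an enumeration of that domain), Theorem~\ref{thm:main} applies to $\mcl{L}$ and yields a bijection $d : \nat \to \Gamma^*$ such that $d[L_{\sigma,a}]$ is regular for every $\sigma \in \mcl{S}$ and $a \in \Delta$. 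By the equivalence recorded in the first paragraph, each $\sigma \in \mcl{S}$ is then $d$-automatic, which is the claim.

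The main obstacle here is bookkeeping rather than mathematical depth: one must verify that applying Theorem~\ref{thm:main} with domain $\nat$ in place of $\Sigma^*$ is legitimate, and that the reduction through the fibers is faithful, which is exactly what Lemma~\ref{lem:fiber} supplies. An equally clean alternative route avoids invoking Theorem~\ref{thm:main} directly and instead proceeds in two steps tailored to automaticity: first use Lemma~\ref{lem:inj:reg:crea} (applied to the fiber class $\mcl{L}$) together with Lemma~\ref{lem:fiber} to obtain an \emph{injective} representation $c : \nat \to \Gamma^*$ for which every $\sigma \in \mcl{S}$ is $c$-automatic; then apply Corollary~\ref{cor:inj2bij:auto} to lift $c$ to a bijective $d$ that preserves $c$-automaticity, so that every $\sigma \in \mcl{S}$ is $d$-automatic as well.
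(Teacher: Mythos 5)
Your proposal is correct and matches the paper's intended argument: the paper presents Corollary~\ref{cor:main:auto} as a direct reformulation of Theorem~\ref{thm:main}, obtained exactly as you do by applying the theorem to the countable family of fibers $L_{\sigma,a} = \{\, n \in \nat \where \sigma(n) = a \,\}$ (modulo a fixed bijection between $\nat$ and $\Sigma^*$, which is legitimate since Lemma~\ref{lem:inj2bij} is already stated for an arbitrary countably infinite domain) and translating back through Lemma~\ref{lem:fiber}, with bijectivity of $d$ making ``relatively regular in $d[\nat]$'' coincide with ``regular''. Your alternative two-step route via Lemma~\ref{lem:inj:reg:crea} followed by Corollary~\ref{cor:inj2bij:auto} is equally sound and stays entirely within the paper's own toolkit.
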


\section{Conclusion and Further Questions}\label{sec:questions}
Our main result, Theorem~\ref{thm:main}, states that 
for every countable class~$\mcl{L} \subseteq \powerset{\Sigma^*}$ of languages 
over a finite alphabet $\Sigma$, and for every alphabet $\Gamma$ with more than two symbols,
there exists a bijective encoding~$f : \Sigma^* \to \Gamma^*$ such that 
for every language $L \in \mcl{L}$ 
its image~$f[L]$ under $f$ is regular.

Furthermore we have shown that this result has a number of noteworthy consequences 
in language theory for regularity preserving functions, 
in computability theory for a concept for comparing the power of models of computation, 
and in the theory of automatic sequences 
for a generalization of this concept with respect to arbitrary number representations: 
\begin{enumerate}[(A)]
  \item
    There exists a computable bijective function $f : \Sigma^* \to \Gamma^*$
    such that the image function $f[\_]$ of $f$ is regularity preserving, 
    but the preimage function $f^{-1}[\_]$ is not
    (Corollary~\ref{cor:puzzle}).
    \smallskip

  \item 
    In the sense of \cite{boke:ders:2006},
    finite-state automata are as powerful
    as any countable decision model (e.g., Turing-ma\-chine deciders) 
    (Proposition~\ref{prop:moc:inj}).
    This even holds for the strongest notion of comparison in~\cite{boke:ders:2006},
    namely that with respect to bijective encodings 
    (Corollary~\ref{cor:moc:bij}).
    Similar counterintuitive consequences also affect computational models beyond decision models.
    \smallskip
    
  \item 
    For every countable class~$\mcl{C}$ of sets of natural numbers 
    there is a bijective number representation $r : \nat \to \Sigma^*$
    such that every set $S \in \mcl{C}$ 
    is \rrecognizable{} (i.e., there is a finite automaton deciding membership $n \in S$ on the input of $r(n)$)
    (Corollary~\ref{cor:main:recog}).

    More generally,
    for every countable class $\mcl{S} \subseteq \str{\Delta}$ of infinite sequences over a finite alphabet~$\Delta$,
    there exists a bijective number representation~$d : \nat \to \Gamma^*$
    such that every $\sigma \in \mcl{S}$ is $d$-automatic
    (Corollary~\ref{cor:main:auto}).

\end{enumerate}

These results also answer the questions in Section~\ref{sec:intro}
concerning the hierarchy of number representations.
From~(A) it follows that the hierarchy is proper: 
there are bijective representations $r_1$ and $r_2$
such that $r_1$ subsumes $r_2$, but not vice versa.
From (C) it follows that
%
%
every countable class $\mathcal{C} \subseteq \powerset{\nat}$ of languages
is contained in the countable class of
all $r$\nb-rec\-og\-niz\-able languages, for some representation $r$.
Moreover, (C) implies that
every injective number representation is subsumed by a bijective number representation,
and that no representation subsumes all others, 
since the class of \rrecognizable{} sets of natural numbers is always countable.

We conclude with two questions:
\begin{itemize}
  \item
    How far can computable bijective encodings $f : \Sigma^* \to \Gamma^*$ extend 
    the class of recognizable languages,
    that is,
    what classes
    $\mcl{L}_f = \{ L \subseteq \Sigma^* \where \text{$f[L]$ is a regular language} \}$    
    can we obtain for a computable bijective $f$?
    For example, is there a computable (bijective) encoding that makes precisely all context-free languages recognizable?
  \item 
    Rigo~\cite{rigo:2000} describes a class of number representations
    that 
         characterizes the 
                           morphic sequences.
    Our results entail the existence of a bijective 
                                                    representation $r : \nat \to \Sigma^*$
    such that every morphic sequence is $r$-auto\-ma\-tic.
    Is there a computable bijective representation $r$ such that \emph{precisely} the morphic sequences
    are $r$\nb-automatic?
\end{itemize}

\section*{Acknowledgment}
\addcontentsline{toc}{section}{Acknowledgment} 

We want to thank Nachum Dershowitz and Udi Boker 
for their remarks and a discussion about our results in Section~\ref{sec:moc},
as well as for several pointers to specific parts of their papers.

\bibliography{main.bib}

\begin{thebibliography}{10}

\bibitem{allo:shal:2003}
J.-P. Allouche and J.~Shallit.
\newblock {\em Automatic Sequences: Theory, Applications, Generalizations}.
\newblock Cambridge University Press, New York, 2003.

\bibitem{bers:boas:cart:peta:pin:2006}
J.~Berstel, L.~Boasson, O.~Carton, B.~Petazzoni, and J.-\'{E}. Pin.
\newblock Operations preserving regular languages.
\newblock {\em Theoretical Computer Science}, 354(3):405--420, 2006.

\bibitem{boke:2004}
U.~Boker.
\newblock {Comparing Computational Power}.
\newblock Master's thesis, Tel Aviv University, 2004.

\bibitem{boke:2008}
U.~Boker.
\newblock {\em {The Influence of Domain Interpretations on Computational
  Models}}.
\newblock PhD thesis, Tel Aviv University, 2008.

\bibitem{boke:ders:2006}
U.~Boker and N.~Dershowitz.
\newblock {Comparing Computational Power}.
\newblock {\em Logic Journal of the {IGPL}}, 14(5):633--647, 2006.

\bibitem{endr:grab:hend:2013}
J.~Endrullis, C.~Grabmayer, and D.~Hendriks.
\newblock {Mix-Automatic Sequences}.
\newblock In {\em Proc. of the 7th International Conference on Language and
  Automata Theory and Applications (LATA~2013)}, number 7810 in LNCS, 2013.

\bibitem{enge:hoog:2001}
J.~Engelfriet and H.~J. Hoogeboom.
\newblock {MSO} definable string transductions and two-way finite-state
  transducers.
\newblock {\em Transactions of the American Mathematical Society},
  2(2):216--254, 2001.

\bibitem{hopc:motw:ullm:2000}
J.~E. Hopcroft, R.~Motwani, and J.~D. Ullman.
\newblock {\em Introduction to Automata Theory, Languages and Computation}.
\newblock Addison-Wesley, 2nd edition, 2000.

\bibitem{kosa:1970}
S.~R. Kosaraju.
\newblock {Finite State Automata with Markers}.
\newblock In {\em Proc. 4th Annual Princeton Conference on Information Sciences
  and Systems}. Princeton, 1970.

\bibitem{kosa:1974}
S.~R. Kosaraju.
\newblock Regularity preserving functions.
\newblock {\em SIGACT News}, 6(2):16--17, 1974.

\bibitem{koze:1996}
D.~Kozen.
\newblock On regularity-preserving functions.
\newblock {\em Bulletin of the European Association for Theoretical Computer
  Science}, pages 131--138, 1996.

\bibitem{kvet:koub:1999}
P.~Kv\v{e}to\v{n} and V.~Koubek.
\newblock Functions preserving classes of languages.
\newblock In {\em Proc. Conf. on Developments in Language Theory}, pages
  81--102. World Scientific, 1999.

\bibitem{mato}
A.~B. Matos.
\newblock Regularity-preserving letter selections.
\newblock DCC-FCUP Interal Report.

\bibitem{pin:2009}
J.-\'{E}. Pin.
\newblock {Profinite Methods in Automata Theory}.
\newblock In {\em Proc. of the 26th Symposium on Theoretical Aspects of
  Computer Science (STACS~2009)}, pages 31--50. {IBFI Schloss Dagstuhl}, 2009.

\bibitem{pin:saka:1982}
J.-\'{E}. Pin and J.~Sakarovitch.
\newblock Some operations and transductions that preserve rationality.
\newblock {\em Theoretical Computer Science}, 145:277--288, 1982.

\bibitem{pin:silv:2005}
J.-\'{E}. Pin and P.~V. Silva.
\newblock A topological approach to transductions.
\newblock {\em Theoretical Computer Science}, 340(2):443--456, 2005.

\bibitem{rabi:1960}
M.~O. Rabin.
\newblock Computable algebra, general theory and theory of computable fields.
\newblock {\em Transactions of the American Mathematical Society},
  95(2):341--360, 1960.

\bibitem{rice:1956}
H.~G. Rice.
\newblock Recursive and recursively enumerable orders.
\newblock {\em Transactions of the American Mathematical Society},
  83(2):277--300, 1956.

\bibitem{rigo:2000}
M.~Rigo.
\newblock Generalization of automatic sequences for numeration systems on a
  regular language.
\newblock {\em Theoretical Computer Science}, 244(1-2):271--281, 2000.

\bibitem{roge:1967}
H.~Rogers.
\newblock {\em Theory of Recursive Functions and Effective Computability}.
\newblock MacGraw--Hill, 1967.

\bibitem{saka:2009}
J.~Sakarovitch.
\newblock {\em Elements of Automata Theory}.
\newblock Cambridge University Press, 2009.

\bibitem{seif:1974}
J.~I. Seiferas.
\newblock A note on prefixes of regular languages.
\newblock {\em SIGACT News}, 6(1):25--29, 1974.

\bibitem{seif:mcna:1976}
J.~I. Seiferas and R.~McNaughton.
\newblock Regularity-preserving relations.
\newblock {\em Theoretical Computer Science}, 2(2):147--154, 1976.

\bibitem{shal:1989}
J.~Shallit.
\newblock {A Generalization of Automatic Sequences}.
\newblock In {\em 6th Symposium on Theoretical Aspects of Computer Science
  (STACS~1989)}, volume 349 of {\em LNCS}, pages 156--167. Springer, 1989.

\bibitem{shap:1982}
S.~Shapiro.
\newblock Acceptable notation.
\newblock {\em Notre Dame Journal of Formal Logic}, 23(1):14--20, 1982.

\bibitem{stea:hart:1963}
R.~E. Stearns and J.~Hartmanis.
\newblock Regularity preserving modifications of regular expressions.
\newblock {\em Information and Control}, 6(1):55--69, 1963.

\end{thebibliography}

\end{document}